\documentclass[12pt]{msml2020} 

\usepackage{algorithm}
\usepackage{algcompatible}
\usepackage{textcomp}
\usepackage{xcolor}
\def\BibTeX{{\rm B\kern-.05em{\sc i\kern-.025em b}\kern-.08em
    T\kern-.1667em\lower.7ex\hbox{E}\kern-.125emX}}
\usepackage{dsfont}
\usepackage{centernot}
\usepackage{bbm}

\newtheorem{observation}[theorem]{Observation}

\title[Scalable Influence Estimation Without Sampling]{Scalable Influence Estimation Without Sampling}
\usepackage{times}



\msmlauthor{%
 \Name{Andrey Y. Lokhov} \Email{lokhov@lanl.gov}\\
 \addr Theoretical Division, Los Alamos National Laboratory, Los Alamos, NM 87544
 \AND
 \Name{David Saad} \Email{d.saad@aston.ac.uk}\\
 \addr School of Engineering \& Applied Science, Aston University, Birmingham B4 7ET, UK
}

\begin{document}

\maketitle

\begin{abstract}%
  In a diffusion process on a network, how many nodes are expected to be influenced by a set of initial spreaders? This natural problem, often referred to as \emph{influence estimation}, boils down to computing the marginal probability that a given node is active at a given time when the process starts from specified initial condition. Among many other applications, this task is crucial for a well-studied problem of \emph{influence maximization}: finding optimal spreaders in a social network that maximize the influence spread by a certain time horizon. Indeed, influence estimation needs to be called multiple times for comparing candidate seed sets. Unfortunately, in many models of interest an exact computation of marginals is \#P-hard. In practice, influence is often estimated using Monte-Carlo sampling methods that require a large number of runs for obtaining a high-fidelity prediction, especially at large times. It is thus desirable to develop analytic techniques as an alternative to sampling methods. Here, we suggest an algorithm for estimating the influence function in popular \emph{independent cascade} model based on a scalable \emph{dynamic message-passing} approach. This method has a computational complexity of a single Monte-Carlo simulation and provides an upper bound on the expected spread on a general graph, yielding exact answer for treelike networks. We also provide dynamic message-passing equations for a stochastic version of the \emph{linear threshold} model. The resulting saving of a potentially large sampling factor in the running time compared to simulation-based techniques hence makes it possible to address large-scale problem instances.
\end{abstract}

\begin{keywords}%
  Influence Estimation, Dynamic Message-Passing, Independent Cascade Model, Linear Threshold Model, Message Passing Algorithm
\end{keywords}

\section{Introduction}
Accurate prediction of an outcome of information diffusion from a given set of initial spreaders is a challenging problem known as \emph{influence estimation}. This is one of the first natural questions that one would like to answer when studying a particular spreading process. Examples include estimation of the size of epidemic outbreak for determining the necessary quarantine and vaccination measures \cite{hethcote2000mathematics}; forecasting the impact of cascading failures in critical infrastructures for determining the set of control actions preventing the outage \cite{dobson2007complex};
or prediction of the outcome of a marketing or political campaign for an optimal use of limited budget and resources \cite{Domingos:2001:MNV:502512.502525}. This last application motivates a popular \emph{influence maximization} problem, pivotal for efficient marketing, opinion setting and other spreading processes within social networks, where the task is broadly defined as identifying a given number of individual constituents who will maximize the spread of information, or influence, within a certain time window~\cite{Domingos:2001:MNV:502512.502525}. This problem was first mathematically formulated for the Independent Cascade (IC) \cite{goldenberg2001talk} and Linear Threshold (LT) \cite{granovetter1978threshold} models by Kempe et al. in \cite{kempe2003maximizing}.
A remarkable result for the NP-hard seed selection problem states that a simple greedy algorithm guarantees a $(1-1/e)$ approximation to the optimal solution provided the oracle value for the influence function. The challenge is that the exact computation of the influence function has been proved to be \#P-hard~\cite{chen:2010:SIM:1835804.1835934} itself. Henceforth, in practice one typically resorts to approximation methods for carrying out the influence estimation task. Existing approaches to influence estimation can be broadly classified into two categories: sampling-based techniques and analytical methods, with or without rigorous guarantees. Without aiming at providing an extensive survey of existing methods, in the Related Work section below we point out most relevant references to the present work.

In this paper, we build on \emph{dynamic message-passing} (DMP) approach and develop two low complexity algorithms for an accurate estimation of the influence function in IC model in an arbitrary time frame: \textsc{DMPest} and \textsc{DMPinf}, for estimating the influence function at finite and infinite time horizons, respectively. \textsc{DMPest} and \textsc{DMPinf} correctly deal with the exclusion of the influence of the node to be updated, and estimate the spread without any restrictions on the values of transmission probabilities. Importantly, both algorithms scale only linearly with the number of edges in the graph for arbitrary initial conditions, including the probabilistic seeding assignment. We prove that DMP-estimated influence is exact on tree graphs and on graphs with the size of the loops that is larger than the time horizon of interest, and still provides an upper bound on exact influence on general loopy graphs. As the main focus of this work, we provide numerical evidence for the accuracy and scalability of the DMP-based approach, and conclude with a few remarks on perspectives of using DMP method for reducing the number of simulations in sampling-based approaches, as well as on possible extensions.

\section{Related work}

\subsection{Sampling methods}
Monte-Carlo sampling represents a natural class of methods used for estimating the influence function~\cite{kempe2003maximizing, Chen:2009:EIM:1557019.1557047, du2013scalable, cohen2014sketch, lucier2015influence, nguyen2017outward}. The crucial bottleneck within this approach consists in the time complexity of Monte-Carlo simulations that grows with the number of graph edges $\vert E \vert$, campaign deadline $T$ and the number of runs $R$ that are required to achieve a desired accuracy of influence estimate. The work \cite{du2013scalable} estimates the number of samples required to achieve a certain accuracy in continuous-time independent cascade model, which however depends on the actual unknown influence function. The sketching approach of \cite{cohen2014sketch} construct oracles guarantee constant-factor approximation approximations to the influence functions through samples of influence graphs, however the number of required samples can be quadratic in the size of the graph. Recent works \cite{lucier2015influence} and \cite{nguyen2017outward} provide advanced algorithms for reducing sampling requirements, and explore the possibility of distributed computations that is essential for scaling up the problem size. However, even with these improved algorithms, the required effective sampling factor $R$ scales at least linearly with the size of the network instance. Hence, it is natural to search for methods that might potentially save on these factors necessary for sampling methods.

\subsection{Analytical methods}
Given the potentially high cost of sampling methods, it is desirable to develop approximate influence estimation methods that do not rely on sampling, but still provide guarantees on the quality of the approximation. \textsc{SteadyStateSpread} algorithm for approximating influence based on a direct application of the model probability rule has been suggested in \cite{aggarwal2011flow}. However, this algorithm does not correctly account for the activation of the node to be updated, and hence is not exact even on tree graphs. This fact has been noticed in subsequent works \cite{zhang2013probabilistic} where inclusion-exclusion terms in the update step have been considered, and in \cite{yang2012approximation} where an approximation based on the linearization of the IC model update rule has been exploited. The paper \cite{zhou2015upper} provides an upper bound on the influence function through an approximate relation of spread at subsequent times via matrix of transmission probabilities. This bound constitutes a basis for the \textsc{UBLF} algorithm for influence maximization \cite{zhou2013ublf}. However, this approach suffers from similar shortcomings as equations in \cite{aggarwal2011flow}, and overestimates the spread even on tree graphs, while asymptotic approximation is obtained only for small transmission probabilities and large number of nodes in the graph. Moreover, convergence analysis of the series requires transmission probabilities to be strictly smaller than one, and the resulting algorithm requires inversion of a system-large matrix which is potentially time-consuming. Lastly, several other spectral methods have been developed for upper-bounding the influence at infinite time based on finding the spectral radius of the adjacency matrix \cite{draief2006thresholds}, Hazard matrix  \cite{lemonnier2014tight}, or its refinement accounting for the sensitive edges \cite{lee2016spectral}.

Several analytical approaches have been developed for particular cases or approximations to the IC model. For instance, \cite{kimura2006tractable} treats the variant of the IC model where propagation only occurs through shortest paths on the network. \cite{liu2014influence} provides an upper bound on the spread, but for the linear model where influence flowing into a node is a linear combination of influences flowing from its neighbors, instead of a product leading to non-linear expressions in the original IC model. Both approaches work as approximations for IC model only for small enough propagation probabilities. Exact equations for estimating influence of a single seed on a tree graph have been derived in \cite{jung2012irie}; however, the general case of an arbitrary set of active nodes in the approach of \cite{jung2012irie} requires the use of an oracle influence estimation algorithm or sampling as a subroutine.

The main challenge in the analytical approach is due to the fact influence estimation represents an NP-hard inference problem of computation of marginal probabilities in loopy graphical models. However, there exists a class of graphical model algorithms that are specifically tailored to tackle this problem. These techniques are related to loopy belief propagation method, and are commonly referred to as \emph{message-passing algorithms} \cite{Mezard:2009:IPC:1592967,Wainwright:2008:GME:1498840.1498841}. Recently, some of these methods have been generalized for selected dynamical models, leading to heuristic algorithms for approximating activation probabilities in several discrete and continuous-time epidemic, threshold and rumor spreading models \cite{PhysRevE.82.016101,altarelli2013large,PhysRevX.4.021024,shrestha2014message,PhysRevE.91.012811}. Message passing-type equations that are most intimately related to the infinite-time case in the present work have appeared in recent work focusing on the computation of the cascade size distribution for homogeneous \cite{burkholz2019efficient} and heterogeneous \cite{burkholz2019cascade} activation probabilities, and in \cite{abbe2017nonbacktracking} where similar expressions have been used for establishing upper bounds on the spread at infinite time using the non-backtracking walks approach, previously used for community detection problems \cite{krzakala2013spectral}. Finally, as noted in \cite{lokhov2016reconstructing}, the finite-time equations for the Independent Cascade model presented in this work are related to the dynamic equations for the susceptible-infected-recovered model \cite{volz2008sir,PhysRevE.82.016101,miller2011note,PhysRevX.4.021024,PhysRevE.91.012811} used for modeling some epidemic spreading processes, in the limit of deterministic recovery rate; as we will see below, a particular dynamical rule for the Independent Cascade model allows one to obtain simplified equations that contain only one type of dynamical variables, or messages, which make them practical for analysis and implementation.

\subsection{Additional Related Work}
A large body of work has been focused on improving the scalability of sampling methods, but acting as a subroutine for drastically reducing the number of unnecessary computations inside the greedy approach to the influence maximization problem; see~\cite{Leskovec:2007:COD:1281192.1281239, goyal2011celf++,Borgs:2014:MSI:2634074.2634144,Tang:2014:IMN:2588555.2593670,Tang:2015:IMN:2723372.2723734} for a non-exhaustive list. Most of these methods are related to influence estimation in the sense that they attempt to carefully prune the nodes that are not important from the influence maximization perspective, see \cite{doi:10.2200/S00527ED1V01Y201308DTM037} and \cite{v011a004} for recent reviews. However, they do not necessarily provide guarantees for the influence estimation problem considered here. Notice that in the present paper, we focus on the task of accurate influence estimation as a general problem that has applications for problems beyond influence maximization, for instance identifying the origin of the influence spread from measurements at sparsely located sensors \cite{shah2011rumors, lokhov2014inferring}, or estimation of model parameters from partially observed samples \cite{lokhov2016reconstructing, lokhov2015efficient}. In addition, there exists a large number of heuristic methods that aim at estimating the influence by completely neglecting the dynamic model, e.g. those based on variants of random selection, weighted degree distributions and node centralities. Although these techniques may scale well, we do not review them here as their performance is not guaranteed and significantly varies in quality depending on the considered model and setting. Finally, let us also mention recent papers that address the problem of direct estimation of influence functions from samples~\cite{NIPS2015_5989,du2014influence}. This line of research is in some sense orthogonal to the present contribution where we assume a well-defined model and develop an analytical method for scalable estimation of the influence function.

\section{Model and Problem Definition}
\label{sec:Problem}

\subsection{Model}
We study the popular model used in the studies of influence estimation and maximization, the discrete-time Independent Cascade model \cite{kempe2003maximizing}. An instance of the IC model is defined on a graph $G=(V,E)$ with $V$ and $E \subseteq V \times V$ denoting the set of nodes and pairwise edges, respectively. At any time, a node $i$ can be in either inactive or active state. A node $i$ activated at a time $t$ has a single chance to activate its neighbor $j$ at the subsequent time $t+1$ with probability $b_{ij}$ associated with the edge $(i,j) \in E$. Each realization of this diffusion process can be interpreted in terms of a live-edge graph \cite{kempe2003maximizing} defined on the set of nodes $V$ and described by a set of binary random variables $\mathbf{d} = \{ d_{ij} \}_{(i,j) \in E}$ associated with edges in the graph. Each edge $(i,j) \in E$ is declared \emph{live} randomly with probability $b_{ij}$ (in which case we set $d_{ij} = 1$), and \emph{blocked} otherwise (and thus characterized by $d_{ij} = 0$). Given the \emph{seeds}, i.e. a set $S$ of active nodes at initial time $t=0$, the set of eventually influenced nodes is given by the set of nodes that are connected to nodes in $S$ via the live edges. We will say that node $i$ is \emph{reachable} from node $j$ in time $t$ on graph $G$ if there exists a path connecting $i$ and $j$ such that $d_{ij} = 1$ along all edges of this path and its length is smaller or equal to $t$. Additionally, for the reasons that will become clear later, in the definition of reachable nodes we exclude paths that traverse the same edge multiple times.

Another model that has been considered in the context of influence estimation and maximization is the Linear Threshold model \cite{granovetter1978threshold}, where to each node $i$ is associated a certain activation threshold $\theta_i \in [0,1]$, and $i$ activates if the following condition is satisfied: $\sum_{j \in \partial i} b_{ji} x_{j} \geq \theta_i$, where $\partial i$ denotes the set of neighbors of $i$, entries of $\mathbf{b}$ satisfy $\sum_{j \in \partial i} b_{ji} \leq 1$, and $x_{j} = 1$ if $j$ is active and $x_{j} = 0$ otherwise. We do not focus on this model in the present work because the original version of this model assumes a deterministic dynamics \cite{granovetter1978threshold, kempe2003maximizing}, where influence can be easily estimated in linear time. It is however possible to generalize this model by introducing a stochastic activation rule and non-deterministic initial conditions. For completeness, in Appendix~\ref{app:LTM} we discuss the dynamic message-passing equations for the LT model, variants of which have previously appeared in the literature for different settings related to this model, see \cite{ohta2010universal,altarelli2013large,shrestha2014message,PhysRevE.91.012811}. 

\subsection{Influence function}
The object of our main interest is the so-called \emph{influence function} $\sigma(S)$, that represents the expected number of ultimately influenced nodes averaged over the stochasticity of transmission probabilities $\mathbf{b} = \{ b_{ij} \}_{ij \in E}$, or, equivalently, over the realization of $\mathbf{d}$. As discussed above, this object is crucial in the classical \emph{influence maximization} problem that attempts to find the set of seeds of size $k$ leading to the maximum value of the influence. Although the influence function is essentially defined in the large time limit when the diffusion process stops, in many real-world scenarios with a very large number of nodes it might be desirable to predict the expected outcome at finite time horizon $T$, which may for instance represent the deadline of a marketing campaign \cite{du2013scalable}. Let $\sigma_{t}(S)$ be the expected number of active nodes at time $t$; with this definition, the original influence function $\sigma(S)$ can be equivalently denoted as $\sigma_{\infty}(S)$.

Classical formulations of the influence estimation and maximization problems assume a deterministic selection of the initial influence set of nodes. At the same time, it is easy to think of real-world applications with limited allocation budget, where an access to a subset of nodes for initial seeding is limited or influencing any desirable node is too costly; there is always a possibility that the initial seed might ``change its mind''. Instead, one may imagine a scenario of massive targeted advertisement campaign that attempts to reach specific nodes by spending more or less resources on implementing the initial seeding. Formally, assume that each node $i$ at initial time is activated independently with probability $p_i(0)$, so that the sum of these probabilities over the entire graph is equal to the total available budget: $\sum_{i \in V} p_i(0) = k$. In this setting, the influence function should be generalized to take into account an arbitrary initial condition $\mathbf{p}_{0} = (p_{1}(0), \ldots, p_{\vert V \vert}(0))$ of the type described above. It may therefore be advantageous to develop influence estimation method that can cope with this extension to probabilistic initial condition. Obviously, the case of classical seeds represents a particular case where probabilities $p_i(0)$ for respective nodes are set to one or zero.

\subsection{Marginal activation probabilities}
Let us denote $p_i(t)$ the probability that node $i$ is active at time $t$, and $t_i$ the time at which node $i$ gets activated (with a convention $t_i = T+1$ if node $i$ does not get activated before the time horizon $T$). Then, given the initial condition $\mathbf{p}_{0}$, the influence function at time $t$ can be expressed as a sum over marginal probabilities that nodes get activated before time $t$:
\begin{equation}
    \sigma_{t}(\mathbf{p_{0}}) = \mathbb{E} \left[\sum_{i \in V}\mathds{1}[t_{i} \leq t]\right] = \sum_{i \in V} p_i(t),
    \label{eq:exact_influence}
\end{equation}
where the expectation is taken over the realizations of $t_{i}$. Therefore, for producing an accurate estimation of the influence function $\sigma_{t}(\mathbf{p}_{0})$, it is crucial to estimate marginal probabilities $p_i(t)$. In the next section, we introduce our \emph{dynamic message-passing} approach to the computation of $p_i(t)$ in the IC model.

\section{Dynamic Message-Passing Method}
\label{sec:DMP}

In this section, we introduce dynamic message-passing equations that will serve as a foundation for our influence estimation algorithm. We will first start with a derivation on tree graphs, and will then state approximate equations that can be used in the case of loopy graphs for estimating marginal activation probabilities $p_i(t)$.   

\subsection{Derivation of DMP equations on trees}

As a starting point, we notice that for each node $i \in V$ the marginal probability $p_i(t)$ can be exactly expressed as
\begin{equation}
    p_{i}(t) = 1 - \left[ 1 - p_{i}(0) \right] q_{i}(t).
    \label{eq:exact_marginal_p}
\end{equation}
The equation \eqref{eq:exact_marginal_p} is true for any graph, and straightforwardly conveys the following meaning: the probability that node $i$ is active at time $t$ is given by one minus the probability that $i$ has not been activated by time $t$. This last event is given by the probability $(1 - p_{i}(0))$ that node $i$ was not active initially times $q_{i}(t)$ that denotes the probability that node $i$ was not activated by its neighbors before time $t$. 

Unfortunately, it is hard to compute the quantity $q_{i}(t)$ on a general graph. However, it is possible to compute $q_{i}(t)$ exactly on tree graphs, whereby
\begin{align}
q_{i}(t) = \prod_{j\in \partial i}q_{j \rightarrow i}(t),
\label{eq:tree_q_computation}
\end{align}
where $\partial i$ denotes the set of neighbors of $i$ on the graph $G$, and $q_{j \rightarrow i}(t)$ represents the probability that node $i$ did not get activated by its neighbor $j$ through the edge $(ji) \in E$ by time $t$. Note that by definition $q_{j \rightarrow i}(t)$ is conditioned on the fact that node $i$ is not active at time $t$.

Figure~\ref{fig:q_factorization}, Left provides an explanation why the expression \eqref{eq:tree_q_computation} is exact for tree graphs. Indeed, if node $i$ is not active, the influence of different branches centered in node $i$ are independent as on a tree by definition they do not have overlapping edges. Therefore, the expression for the marginal \eqref{eq:exact_marginal_p} can be rewritten as follows:
\begin{align}
p_{i}(t) = 1 - \left[ 1 - p_{i}(0) \right] \prod_{j\in \partial i}q_{j \rightarrow i}(t).
\label{eq:tree_marginal_p}
\end{align}

\begin{figure}[thb]
    \centering
    \includegraphics[width=0.45\columnwidth]{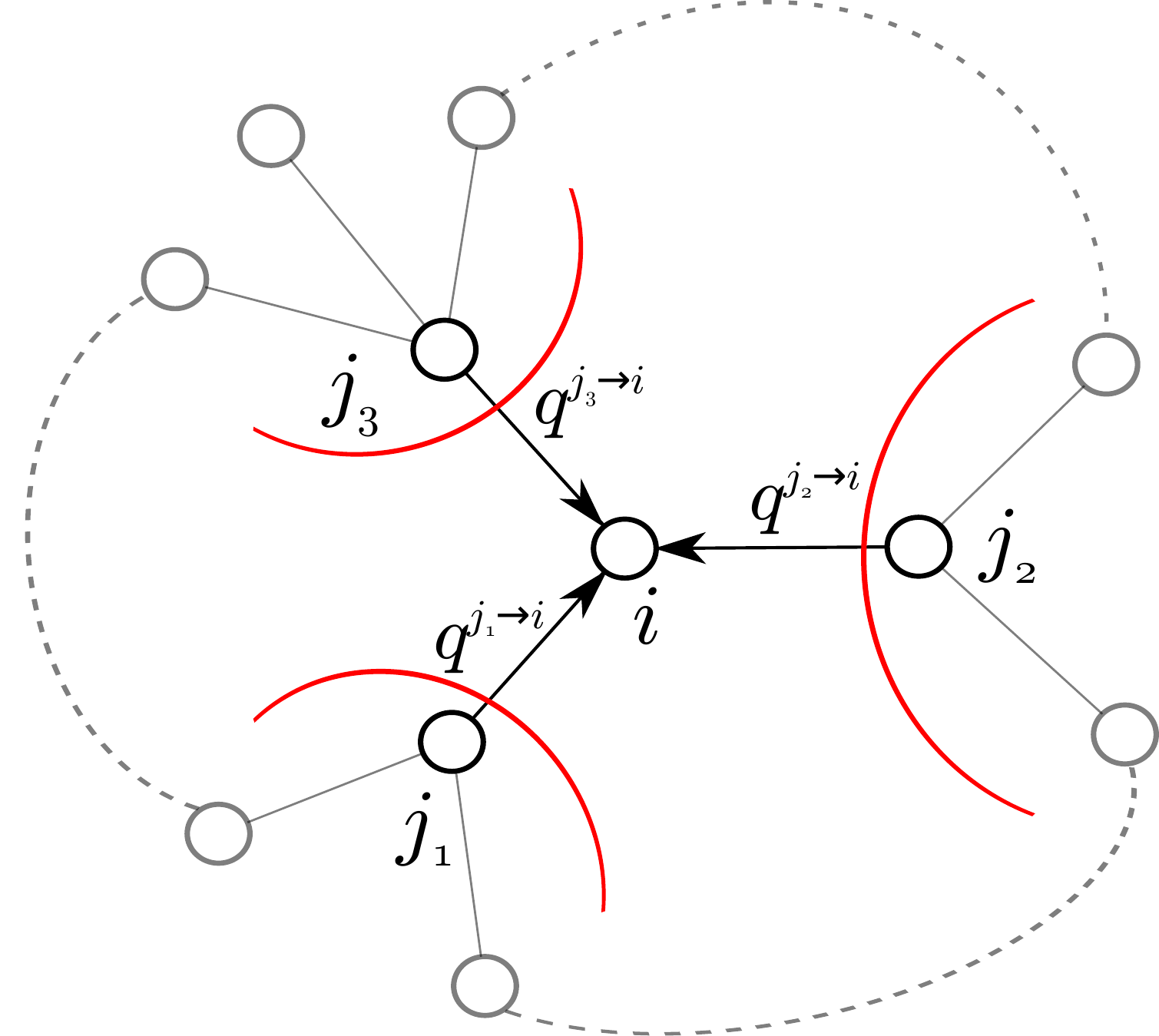} \hfill \includegraphics[width=0.39\columnwidth]{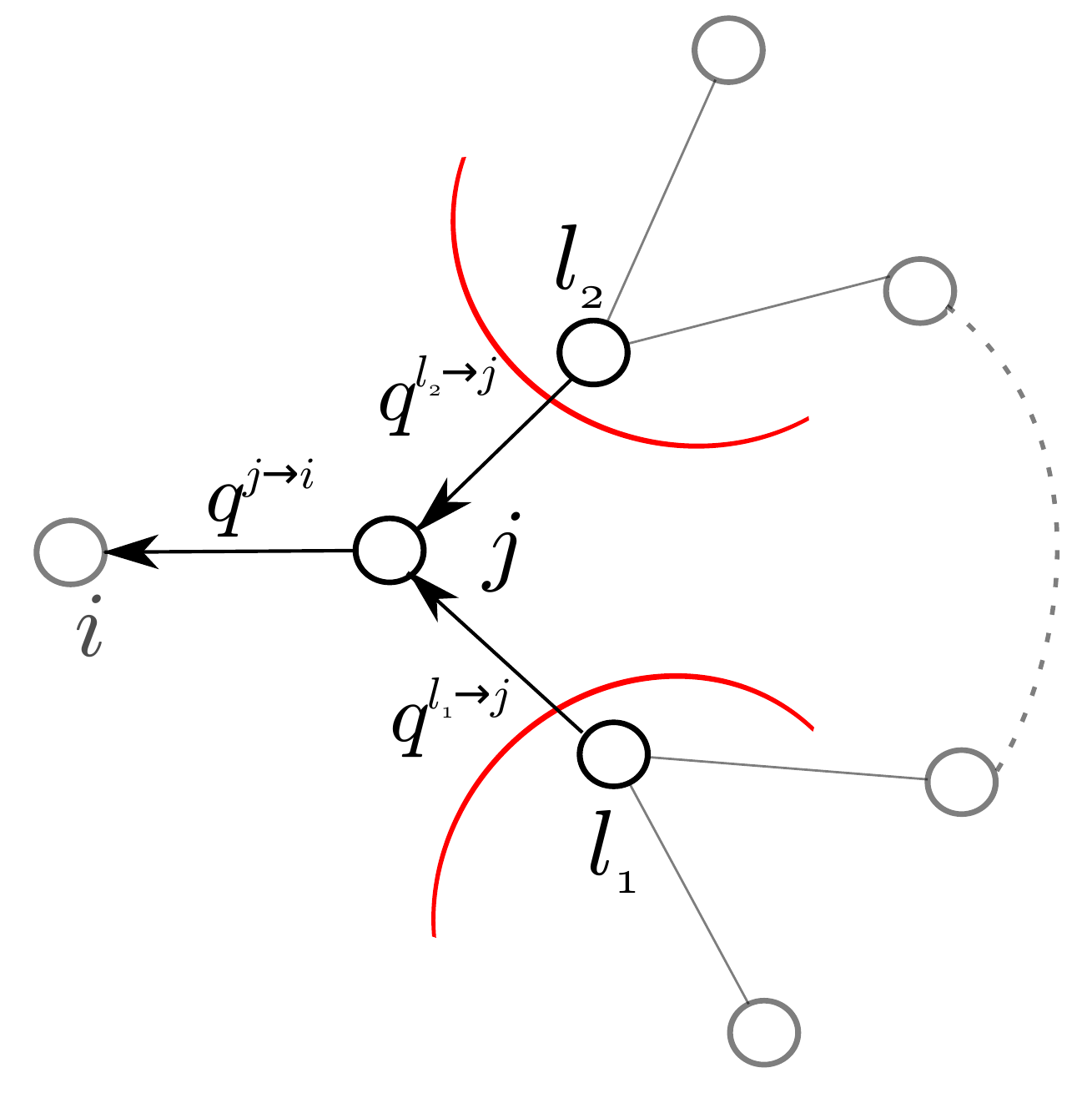}
    \caption{{\bf Left:} Illustration of computation of $q_i(t)$. On a tree (dashed paths not present), contribution of different branches are independent, and hence expressions \eqref{eq:tree_q_computation} and \eqref{eq:tree_marginal_p} are exact. On a loopy graph (dashed paths may exist), the quantities $q_{j \rightarrow i}(t)$ for $j \in \partial i$ are not independent, and factorization \eqref{eq:tree_q_computation} becomes only an approximation. {\bf Right:} Illustration of computation of $q^{(i)}_{j}(t)$. Node $i$ has been removed from the graph, and does not influence $j$. On a tree (dashed paths not present), contribution of different remaining branches are independent, and hence expressions \eqref{eq:tree_q_j_computation} and \eqref{eq:tree_p_through_q} are exact. On a loopy graph (dashed paths may exist), the quantities $q_{l \rightarrow j}(t)$ for $l \in \partial j \backslash i$ are not independent, and factorization \eqref{eq:tree_q_j_computation} becomes only an approximation. Both approximations \eqref{eq:tree_q_computation} and \eqref{eq:tree_q_j_computation} are good in treelike networks, where dashed paths are long, or in the case where transmission probabilities along the dashed path are small.}
    \label{fig:q_factorization}
\end{figure}

Let us now introduce a different quantity, $p_{j \rightarrow i}(t)$, that denotes the probability that $j$ is activated conditioned on the fact that $i$ is not. As previously, $p_{j \rightarrow i}(t)$ can be thought of as an equivalent of $p_{j}(t)$ defined on a graph where $i$ has been deleted together with all its adjacent edges. Therefore, similarly to \eqref{eq:exact_marginal_p}, $p_{j \rightarrow i}(t)$ can be defined as
\begin{equation}
    p_{j \rightarrow i}(t) = 1 - \left[ 1 - p_{j}(0) \right] q^{(i)}_{j}(t),
    \label{eq:exact_p}
\end{equation}
where $q^{(i)}_{j}(t)$ is defined in the cavity graph from which node $i$ has been removed. The definition \eqref{eq:exact_p} is valid for any graph, but, again, it is difficult to compute the quantity $q^{(i)}_{j}(t)$ on general graph. However, equivalently to \eqref{eq:tree_q_computation}, on tree graphs we have
\begin{align}
q^{(i)}_{j}(t) = \prod_{l\in \partial j \backslash i}q_{l \rightarrow j}(t).
\label{eq:tree_q_j_computation}
\end{align}
Equation \eqref{eq:tree_q_j_computation} mimics expression \eqref{eq:tree_q_computation}, but for $q^{(i)}_{j}(t)$ instead of $q_{i}(t)$. The difference is that the product in the right hand side of \eqref{eq:tree_q_j_computation} runs over $\partial j \backslash i$ that denotes the set of neighbors of $j$ except $i$; this is due to the definitions of $p_{j \rightarrow i}(t)$ and $q^{(i)}_{j}(t)$ that assume that $i$ is not active at time $t$ (or, alternatively, that $i$ has been removed from the graph), see Figure~\ref{fig:q_factorization}, Right for an illustration of this concept.


Now, similarly to \eqref{eq:tree_marginal_p}, the evolution of $p_{j \rightarrow i}(t)$ for times $t \geq 0$ on tree graphs simply reads:
\begin{equation}
p_{j \rightarrow i}(t)=1 - \left[ 1 - p_{j}(0) \right] \prod_{l\in \partial j \backslash i}q_{l \rightarrow j}(t).
\label{eq:tree_p_through_q}
\end{equation}

At this point, let us notice that $q_{j \rightarrow i}(t+1)$ can be expressed as $p_{j \rightarrow i}(t)$ by recalling the definition of these two quantities:
\begin{equation}
    q_{j \rightarrow i}(t+1) = 1 - b_{ji}p_{j \rightarrow i}(t).
    \label{eq:q_through_p}
\end{equation}
Indeed, the probability that $i$ did not get activated by its neighbor $j$ through the edge $(ji) \in E$ by time $t+1$ is given by one minus that both of the following events occurred: node $j$ got activated by time $t$ (this happens with probability $p_{j \rightarrow i}(t)$), and the edge $(ji)$ was live, i.e. $d_{ji} = 1$ (this happens with probability $b_{ji}$).

Now substituting relation \eqref{eq:q_through_p} in \eqref{eq:tree_p_through_q}, we finally obtain the following equations for $t>0$:
\begin{align}
&p_{j \rightarrow i}(t)=1 - \left[ 1 - p_{j}(0) \right] \prod_{l\in \partial j \backslash i}\left[ 1 - b_{lj} p_{l \rightarrow j}(t-1) \right], \label{eq:tree_p}
\\
&p_{i}(t) = 1 - \left[ 1 - p_{i}(0) \right] \prod_{j\in \partial i}\left[ 1 - b_{ji} p_{j \rightarrow i}(t-1) \right].
\end{align}
Once the conditional probabilities $p_{j \rightarrow i}(t)$ are obtained by running the system of equations on graph edges \eqref{eq:p}, the marginal probabilities $p_i(t)$ are estimated through equation \eqref{eq:marginal_p} . 
Combined with the initialization
\begin{align}
p_{i \rightarrow j}(0) = p_{i}(0) \quad \forall (i,j) \in E,
\label{eq:tree_initialization}
\end{align}
equations \eqref{eq:tree_p}-\eqref{eq:tree_initialization} constitute our \emph{dynamic message-passing equations} for computing the marginals in IC model on tree graphs. The reason for this name is that \eqref{eq:tree_p} can be interpreted as passing ``dynamic messages'' (conditional probabilities $p_{j \rightarrow i}(t)$) along the edges of the graph.

\subsection{DMP equations on arbitrary graphs}

Definitions \eqref{eq:exact_marginal_p}, \eqref{eq:exact_p} and \eqref{eq:q_through_p} are valid on arbitrary graphs. However, equations \eqref{eq:tree_q_computation} and \eqref{eq:tree_q_j_computation} are no longer exact on graphs with loops: as explained in Figure~\ref{fig:q_factorization}, the factorization over branches rooted at node $i$ is in general no longer valid because quantities $q_{j \rightarrow i}(t)$ are not independent anymore due to potential presence of loops. However, we can still use these expressions as an approximation. A priori, this approximation is of a good quality as long as either of the two conditions are met: (i) the graph is locally treelike, i.e. the length of a typical loop is large; (ii) transmission probabilities are small. In both of these cases, the mutual influence between different $q_{j \rightarrow i}(t)$ becomes negligeble and vanishes while propagating through the loop. In the field of message-passing algorithms, these conditions are usually formally referred to as correlation decay properties, and can be proven for certain models \cite{Wainwright:2008:GME:1498840.1498841}. Condition (i) is typically met for random graphs for large enough $N$; for instance, the smallest loop in a sparse Erd\H{o}s-R\'enyi random graph with coordination number $c$ scales as $\log_c N$ \cite{Mezard:2009:IPC:1592967}.   

Taking the considerations above into account, we essentially use equations \eqref{eq:tree_p} and \eqref{eq:tree_marginal_p} as a definition of the algorithm defined on a general graph with loops, replacing exact marginals $p_{i}(t)$ and messages $p_{j \rightarrow i}(t)$ by their estimates that we denote by $\widehat{p}_{i}(t)$ and $\widehat{p}_{j \rightarrow i}(t)$\footnote{This is similar to how loopy belief propagation algorithm relates to belief propagation algorithm derived on a tree graph \cite{Wainwright:2008:GME:1498840.1498841}.}. For instance, it means that we employ the following approximation for $p_{i}(t)$:
\begin{align}
p_{i}(t) \approx \widehat{p}_{i}(t) = 1 - \left[ 1 - p_{i}(0) \right] \prod_{j\in \partial i}\widehat{q}_{j \rightarrow i}(t),
\label{eq:approx_marginal_p}
\end{align}
where $\widehat{q}_{j \rightarrow i}(t) = 1 - b_{ji}\widehat{p}_{j \rightarrow i}(t)$ represents an estimate of $q_{j \rightarrow i}(t)$.

In the end of the day, DMP equations for IC model on arbitrary graphs read for all $i \in V$ and $(i,j) \in E$:
\begin{align}
&\widehat{p}_{j \rightarrow i}(t)=1 - \left[ 1 - p_{j}(0) \right] \prod_{l\in \partial j \backslash i}\left[ 1 - b_{lj} \widehat{p}_{l \rightarrow j}(t-1) \right], \label{eq:p}
\\
&\widehat{p}_{i}(t) = 1 - \left[ 1 - p_{i}(0) \right] \prod_{j\in \partial i}\left[ 1 - b_{ji} \widehat{p}_{j \rightarrow i}(t-1) \right].
\label{eq:marginal_p}
\end{align}
Estimates $\widehat{p}_{j \rightarrow i}(t)$ are obtained by running the system of equations on graph edges \eqref{eq:p} starting with the initialization
\begin{align}
\widehat{p}_{i \rightarrow j}(0) = p_{i}(0) \quad \forall (i,j) \in E.
\label{eq:initialization}
\end{align}
Marginal probabilities $p_i(t)$ are then estimated through equation \eqref{eq:marginal_p}.

At this point, let us make a connection to existing analytical results for the IC model in the literature. The \textsc{SteadyStateSpread} algorithm of \cite{aggarwal2011flow} is based on update equations that are similar to \eqref{eq:marginal_p}, except that $\widehat{p}_{j \rightarrow i}(t-1)$ in the right hand side of the equation is replaced by the estimate of the full marginal probability $\widehat{p}_{j}(t-1)$. Although the resulting expression is similar to the update rule in IC model for a single sampling run, it is easy to see that it leads to incorrect dependencies for activation probabilities even on a tree graph. In particular, it creates an ``echo chamber'' effect, where node $j$ can influence node $i$ while being influenced by $i$ at the previous step, a situation clearly prohibited by the model. The algorithms of \cite{kimura2006tractable} and \cite{zhou2015upper} can suffer from similar effects, while DMP equations correctly excludes this kind of effect via dynamic messages. Interestingly, this effect has been accounted for in an earlier work \cite{jung2012irie}, where equations for estimating influence of a single seed on a tree graph have been derived; however, the authors did not derive the respective equations for the case of an arbitrary set of active nodes, and instead wrote approximate equations that require the use of a different influence estimation algorithm or sampling as a subroutine. In the case of linearized scheme similar to the one in \cite{liu2014influence}, one should be careful and start from DMP equations \eqref{eq:p}-\eqref{eq:marginal_p} that provide a better approximation to the underlying marginal probabilities. It is worth noticing that most analytical methods reported better results in case of small transition probabilities. We anticipate that this may be possible due to the correlation decay of the type described above, and represents a natural behavior of approximation algorithms. Finally, as discussed in the Introduction, expressions similar to the large-time limit (discussed below) of the DMP equations introduced above have appeared as subroutines in \cite{abbe2017nonbacktracking,burkholz2019cascade}, while finite-time DMP equations are equivalent to those of the SIR model \cite{PhysRevE.82.016101,PhysRevE.91.012811} in the limit of deterministic recovery rate, but involve a singe dynamic message due to the peculiarity of dynamic rules of the IC model.  

\section{Influence Estimation with DMP}
\label{sec:Estimation}

In this section, we use DMP equations derived in the previous section to estimate the influence function under IC model.  

\subsection{Influence estimation at finite time}

DMP equations represent the central part of the \textsc{DMPest} algorithm that we use to provide an estimation $\widehat{\sigma}_{T}(\mathbf{p}_{0})$ of the influence function value \eqref{eq:exact_influence}, see Algorithm~\ref{alg:DMPest}. It is easy to see that for influence estimation at finite time horizon $T$, the computational complexity of \textsc{DMPest} is $O(\vert E \vert T)$, i.e. proportional to the complexity of a single Monte-Carlo simulation of the IC dynamics.

\begin{algorithm}[tb]
   \caption{\textsc{\textsc{DMPest} ($G$, $\mathbf{b}$, $\mathbf{p_{0}}$, $T$)}}
\begin{algorithmic}[1]
   \STATEx {\bfseries Input:} Graph $G = (V,E)$, time horizon $T$, transition probabilities $\mathbf{b} = \{ b_{ij} \}_{(ij) \in E}$, initial conditions $\mathbf{p}_{0} = (p_{1}(0), \ldots, p_{\vert V \vert}(0))$
   \FOR{$(i,j) \in E$}
   \STATE Initialize $\widehat{p}_{i \rightarrow j}(0) = p_{i}(0)$ as in \eqref{eq:initialization}
   \ENDFOR
   \FOR{$t=1$ {\bfseries to} $T-1$}
   \FOR{$(i,j) \in E$}
   \STATE Compute $\widehat{p}_{j \rightarrow i}(t)$ through \eqref{eq:p}
   \ENDFOR
   \ENDFOR
   \FOR{$i \in V$}
   \STATE Compute $\widehat{p}_{i}(T)$ using \eqref{eq:marginal_p}
   \ENDFOR
   \STATE $\widehat{\sigma}_{T}(\mathbf{p}_{0}) \leftarrow \sum_{i \in V} \widehat{p}_i(T)$
   \STATE {\bfseries return} $\widehat{\sigma}_{T}(\mathbf{p}_{0})$
\end{algorithmic}
\label{alg:DMPest}
\end{algorithm}

Let us now outline some properties of the DMP equations \eqref{eq:p}-\eqref{eq:initialization}. The following theorem suggests that influence estimate obtained through \textsc{DMPest} is exact when $G$ is a tree.

\begin{theorem}[Exact estimation on trees]
The influence estimate $\widehat{\sigma}_{T}(\mathbf{p}_{0})$ output by \textsc{DMPest} is exact if the underlying graph $G$ is a tree.
\label{th:tree}
\end{theorem}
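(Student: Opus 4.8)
The plan is to show that the DMP equations \eqref{eq:p}--\eqref{eq:marginal_p}, when specialized to a tree, reproduce the exact marginals $p_i(t)$, so that $\widehat{\sigma}_T(\mathbf{p}_0) = \sum_i p_i(T) = \sigma_T(\mathbf{p}_0)$ by \eqref{eq:exact_influence}. Since on a tree the estimates $\widehat p_{j\to i}(t)$ and $\widehat p_i(t)$ coincide with the tree expressions \eqref{eq:tree_p}--\eqref{eq:tree_initialization}, it suffices to prove that the tree recursion actually computes the true conditional and marginal probabilities. The cleanest route is an induction on $t$, carried out simultaneously for all messages $p_{j\to i}(t)$ on directed edges and then transferred to the marginals $p_i(t)$ via \eqref{eq:tree_marginal_p}. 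The base case $t=0$ is exactly the initialization \eqref{eq:tree_initialization}: on a graph with node $i$ removed, the probability that $j$ is active at time $0$ is $p_j(0)$, independent of anything downstream, which matches $p_{j\to i}(0)=p_j(0)$.

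For the inductive step I would fix a directed edge $(j\to i)$ and consider the subtree $T_{j\to i}$ hanging off $j$ when the edge to $i$ is cut (this is well-defined precisely because $G$ is a tree). The quantity $p_{j\to i}(t)$ is the probability, in the original IC dynamics restricted to $T_{j\to i}$ with $i$ absent, that $j$ has been activated by time $t$. By the complement decomposition \eqref{eq:exact_p}, this equals $1-[1-p_j(0)]\,q_j^{(i)}(t)$, so I need $q_j^{(i)}(t)$, the probability that $j$ was not activated through any of its children $l\in\partial j\setminus i$ up to time $t$. The decisive point is the independence across children: in a tree the subtrees $T_{l\to j}$ for distinct $l\in\partial j\setminus i$ share no edges, hence the live-edge variables $\mathbf d$ governing them are independent, and consequently the events ``$j$ was not activated via $l$ by time $t$'' are mutually independent. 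This gives the product $q_j^{(i)}(t)=\prod_{l\in\partial j\setminus i} q_{l\to j}(t)$, i.e. \eqref{eq:tree_q_j_computation}. Then \eqref{eq:q_through_p}, which is a direct consequence of the one-shot activation rule (node $l$ infects $j$ at the next step iff $l$ was already active — probability $p_{l\to j}(t-1)$ conditioned on $j$ inactive — and the edge $(lj)$ is live — probability $b_{lj}$), combined with the inductive hypothesis $\widehat p_{l\to j}(t-1)=p_{l\to j}(t-1)$, yields exactly \eqref{eq:tree_p}. The same argument with $j$ ranging over all of $\partial i$ (and no excluded neighbor) gives \eqref{eq:marginal_p} for the true marginal $p_i(T)$.

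The one subtlety I want to be careful about — and what I expect to be the main obstacle — is making the conditioning precise: $p_{j\to i}(t)$ and $q_j^{(i)}(t)$ are defined \emph{conditioned on $i$ being inactive at time $t$}, not unconditionally, and one must check that on a tree this conditioning is equivalent to simply deleting $i$ and all its incident edges. This is where treelikeness is essential: removing $i$ disconnects the subtree $T_{j\to i}$ from the rest of the graph, so the dynamics inside $T_{j\to i}$ is unaffected by whatever $i$ does, and the event $\{i \text{ inactive at } t\}$ is independent of the trajectory inside $T_{j\to i}$; hence the conditional law of $j$'s activation time equals its law in the $i$-deleted graph. On a loopy graph this fails — an alternative path can carry influence from $i$ back into the ``subtree'' — which is exactly why \eqref{eq:tree_q_computation} and \eqref{eq:tree_q_j_computation} degrade to approximations, consistent with the discussion around Figure~\ref{fig:q_factorization}. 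Once this conditioning equivalence is established, the rest is the routine induction sketched above, and summing the exact $p_i(T)$ over $i\in V$ via \eqref{eq:exact_influence} finishes the proof.
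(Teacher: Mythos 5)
Your proposal is correct and follows essentially the same route as the paper: the paper's proof of Theorem~\ref{th:tree} simply invokes the tree derivation of Section~\ref{sec:DMP} (equations \eqref{eq:tree_q_computation}--\eqref{eq:tree_initialization}), and your induction on $t$ over directed edges, with the branch-independence and cavity/conditioning argument, is precisely a careful formalization of that derivation. The subtlety you flag about the equivalence between conditioning on $i$ inactive and deleting $i$ is the same point the paper handles via its convention of excluding paths that traverse an edge twice, so nothing is missing.
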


\begin{proof}[Proof of Theorem \ref{th:tree}]
Directly follows from the derivation of DMP equations in tree networks in the previous section.
\end{proof}

Alternatively, the statement of the theorem follows from the fact that DMP equations \eqref{eq:p}-\eqref{eq:initialization} can be derived using the general belief propagation approach on time trajectories \cite{PhysRevE.91.012811} for models with progressive dynamics, while belief propagation equations provide exact marginal probabilities on tree graphs. Perhaps more interestingly, it is possible to show that \textsc{DMPest} algorithm provides an upper bound on the influence value on general loopy graphs, as stated in the following theorem.

\begin{theorem}[Bound on the influence on general graphs]
For general graphs, the estimate $\widehat{\sigma}_{T}(\mathbf{p}_{0})$ output by \textsc{DMPest} represents an upper bound on the influence function value $\sigma_{T}(\mathbf{p}_{0})$.
\label{th:bound}
\end{theorem}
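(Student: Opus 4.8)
The plan is to compare the DMP estimate to the exact marginals by induction on the time horizon, showing that at every step DMP overestimates each activation probability. Concretely, I would prove the stronger statement that for every node $i$ and every edge $(j,i)$, the DMP quantities dominate the true ones: $\widehat{p}_i(t) \geq p_i(t)$ and $\widehat{p}_{j \rightarrow i}(t) \geq p_{j \rightarrow i}(t)$ for all $t \geq 0$. Summing the first inequality over $i \in V$ and using \eqref{eq:exact_influence} then gives $\widehat{\sigma}_T(\mathbf{p}_0) \geq \sigma_T(\mathbf{p}_0)$, which is the claim. The base case $t=0$ is immediate from the initialization \eqref{eq:initialization}, where $\widehat{p}_{i \rightarrow j}(0) = p_i(0)$ matches the true value exactly.

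For the inductive step, I would fix the true marginals and true messages and analyze the update maps \eqref{eq:p}--\eqref{eq:marginal_p}. The key observation is that the right-hand side of \eqref{eq:p}, viewed as a function of the incoming messages $\{p_{l \rightarrow j}(t-1)\}_{l \in \partial j \backslash i}$, is \emph{monotone nondecreasing} in each argument: increasing any $\widehat{p}_{l \rightarrow j}$ increases $b_{lj}\widehat{p}_{l \rightarrow j}$, decreases the factor $1 - b_{lj}\widehat{p}_{l \rightarrow j} \in [0,1]$, decreases the product, and hence increases $\widehat{p}_{j \rightarrow i}(t)$; the same holds for \eqref{eq:marginal_p}. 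So if by the inductive hypothesis $\widehat{p}_{l \rightarrow j}(t-1) \geq p_{l \rightarrow j}^{\mathrm{tree-formula}}(t-1)$, plugging these into the tree-style update can only help. The remaining ingredient is that the \emph{true} quantities satisfy the tree recursion only as an inequality on a loopy graph: one must show $p_{j \rightarrow i}(t) \leq 1 - [1-p_j(0)]\prod_{l \in \partial j \backslash i}[1 - b_{lj} p_{l \rightarrow j}(t-1)]$, i.e. that the product over branches in \eqref{eq:tree_q_j_computation} \emph{underestimates} the true $q^{(i)}_j(t)$ because positively correlated non-activation events have joint probability at least the product of their marginals. This is a correlation/FKG-type statement on the live-edge graph: conditioned on $i$ being inactive, the events ``$i$ is not reached from $j$ through the $l$-branch'' are increasing events in the (independent) live-edge variables, so they are positively associated, whence $q^{(i)}_j(t) \geq \prod_l q_{l\rightarrow j}^{(i)}(t) \geq \prod_l \widehat{q}_{l \rightarrow j}(t-1)$ after applying induction.

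I expect the main obstacle to be making the correlation inequality rigorous, because the natural events live on the live-edge graph with node $i$ removed, and one must carefully argue (a) that restricting paths to not traverse the same edge twice — the convention introduced in the model section — is exactly what makes the branch events measurable with respect to disjoint or at least positively-correlated sets of edge variables, and (b) that the conditioning on $i$ inactive and the cavity construction interact correctly, so that the messages $q_{l \rightarrow j}(t)$ appearing on the right of \eqref{eq:tree_q_j_computation} are themselves upper bounds in the appropriate sense. The cleanest route is probably to avoid $q^{(i)}_j$ entirely and instead unroll the DMP recursion into a sum over non-backtracking walks, comparing term by term with an inclusion–exclusion expansion of $\sigma_T$; the DMP expansion keeps only non-backtracking contributions with the ``right sign'' and drops corrections that would subtract off overcounted loops, so each truncation is an overestimate. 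Either way, once the monotonicity of the update and the branch-correlation inequality are in place, the induction closes routinely and the bound on $\widehat{\sigma}_T$ follows by summation.
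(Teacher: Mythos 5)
Your proposal is correct and takes essentially the same route as the paper: your induction with monotone DMP updates is the paper's Lemma on overestimation of the messages $\widehat{p}_{j\rightarrow i}(t)$, and the FKG-type correlation step you identify is exactly the paper's Chebyshev integral inequality for comonotonic functions, applied in the live-edge representation together with an overcounting inequality for the reachable sets $\mathcal{N}^{i\leftarrow j}_t[\mathbf{d}]$ (which is where your worry about non-backtracking paths and the cavity construction gets resolved).
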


The proof of the Theorem~\ref{th:bound} is given in the Appendix~\ref{app:proof_theorem2}. Theorem~\ref{th:bound} states that \textsc{DMPest} can be used to quantify the efficacy of investment in a marketing campaign, providing a guarantee that the spread will not rise above the bound given by \textsc{DMPest}. On the other hand, a careful examination of the proof of Theorem~\ref{th:bound} conveys that in some common cases (such as locally treelike networks or sparse random graphs) \textsc{DMPest} can provide exact predictions for the influence even in the presence of loops.

\begin{corollary}
    Let $L$ denote the length of the shortest loop in the graph $G$. \textsc{DMPest} provides an exact estimate of the influence if $L \geq 2T + 1$.
    \label{corollary}
\end{corollary}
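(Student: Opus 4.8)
The plan is to show that when $L \geq 2T+1$, the DMP messages $\widehat{p}_{j\to i}(t)$ coincide with the true conditional probabilities $p_{j\to i}(t)$ for all $t \leq T$, so that $\widehat{p}_i(T) = p_i(T)$ for every node and hence $\widehat{\sigma}_T(\mathbf{p}_0) = \sigma_T(\mathbf{p}_0)$ exactly. The key observation is that the only source of error in the DMP equations on loopy graphs is the factorization assumptions \eqref{eq:approx_marginal_p} and its message analogue: the quantities $\widehat{q}_{l\to j}(t)$ entering the product in \eqref{eq:p} are treated as independent, whereas on a loopy graph they can be correlated through a cycle. I would argue that such correlations cannot arise within the time horizon $T$ when every cycle is long enough.

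The main technical step is a locality argument. I would fix an edge $(j,i)$ and unroll the recursion \eqref{eq:p} for $t$ steps: the value $\widehat{p}_{j\to i}(t)$ depends only on the messages on edges within graph-distance $t$ of $j$ in the cavity graph where $i$ is deleted, i.e. on the depth-$t$ ``computation tree'' rooted at $j\to i$. This computation tree is a genuine tree (the universal cover of the neighbourhood), and the DMP recursion run on it produces exactly the true conditional probability $p_{j\to i}(t)$ by Theorem \ref{th:tree} applied to that tree. The claim then reduces to showing that the depth-$T$ computation tree rooted at $j\to i$ is isomorphic, as a labelled graph with transmission probabilities, to the corresponding depth-$T$ neighbourhood of $j$ in $G\setminus i$ — that is, no vertex of $G$ appears twice in the tree. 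A repeated vertex in the computation tree would correspond to a closed walk in $G$ of length at most $2T$ starting and ending at (or near) $j$, using edge $(j,i)$ at most implicitly; such a walk, after removing backtracking, contains a genuine cycle of length at most $2T$, contradicting $L \geq 2T+1$. The bound $2T+1$ appears precisely because the shortest cycle that could create a coincidence has two arcs from $j$ of length up to $T$ each; an off-by-one check on the non-backtracking condition (the message recursion at $j\to i$ already excludes the immediate back-step to $i$) confirms the threshold is $L \geq 2T+1$ rather than $L \geq 2T$.

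Once the computation tree is shown to embed faithfully into $G$, exactness of the messages $\widehat{p}_{j\to i}(t) = p_{j\to i}(t)$ for $t \leq T-1$ follows, and feeding these into \eqref{eq:marginal_p} gives $\widehat{p}_i(T) = p_i(T)$; summing over $i$ and invoking \eqref{eq:exact_influence} yields $\widehat{\sigma}_T(\mathbf{p}_0) = \sigma_T(\mathbf{p}_0)$. I expect the main obstacle to be making the ``computation tree embeds faithfully'' claim fully rigorous: one must carefully track that the relevant walks are non-backtracking (so that they genuinely close into a cycle rather than retracing an edge), handle the asymmetry introduced by the cavity deletion of $i$, and verify the exact constant in $L \geq 2T+1$. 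This is the same mechanism that underlies the proof of Theorem \ref{th:bound} in Appendix \ref{app:proof_theorem2}, so I would expect to reuse its loop-unrolling bookkeeping and simply observe that when loops are long the inequality chain there becomes an equality.
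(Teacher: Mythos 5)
Your argument is correct, and its primary route differs from the one the paper intends. The paper proves the corollary by re-examining the chain of inequalities \eqref{eq:theorem_(a)}--\eqref{eq:theorem_(g)} in the proof of Theorem~\ref{th:bound} and checking that each inequality is tight under the girth condition: when $L \geq 2T+1$ the reachable sets $\mathcal{N}^{i \leftarrow j}_T[\mathbf{d}]$ for distinct $j \in \partial i$ are disjoint, so the overcounting relation \eqref{eq:overcounting_sets} and hence \eqref{eq:overcounting} hold with equality; the Chebyshev step (Lemma~\ref{lemma:chebyshev}) is an equality because the factors then depend on disjoint, independent sets of live-edge variables; and Lemma~\ref{lemma:underestimation} becomes an equality by the same argument applied inductively at earlier times. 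You instead give a locality/computation-tree argument: the depth-$T$ unrolling of the recursion \eqref{eq:p} embeds injectively into $G$ when $L \geq 2T+1$ (two non-backtracking walks of length at most $T$ meeting at a common vertex would close into a cycle of length at most $2T$), the true dynamics up to time $T$ only see this radius-$T$ ball, and Theorem~\ref{th:tree} applied to that tree gives exactness. Both arguments are sound and yield the same constant; yours is arguably more self-contained and makes the role of the girth condition geometrically transparent, while the paper's buys the result almost for free from machinery already built for Theorem~\ref{th:bound} (you correctly identify this shortcut in your closing sentence). Two small points to tighten: (i) the injective-embedding claim should be stated for the tree rooted at the node $i$ as well as for each directed edge, since exactness of the final marginal step \eqref{eq:marginal_p} needs the factorization \eqref{eq:tree_q_computation} at $i$ itself, which a cycle of length at most $2T$ through $i$ would violate; (ii) note that the ball of radius $T$ around $i$ may still contain a cycle of length exactly $2T+1$, so the precise statement is that no vertex appears twice \emph{within depth $T$} of the computation tree, not that the ball is literally a tree --- your distance-sum argument already gives exactly this.
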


In the spirit of the Corollary~\ref{corollary}, it is possible to use spanning trees to construct lower bounds on the influence function. Indeed, in practice, application of \textsc{DMPest} to any spanning tree of the original graph $G$ will provide such a lower bound as it will neglect possible correlations coming from the loops; a similar approach has been previously used in \cite{abbe2017nonbacktracking} using directed acyclic graphs.


\subsection{Influence estimation at infinite time}

DMP equations \eqref{eq:p}-\eqref{eq:initialization} have low algorithmic complexity compared to sampling methods, essentially saving a potentially very large multiplicative factor that is needed for gaining accuracy from simulations. Still, the algorithmic complexity of \textsc{DMPest} is linearly proportional to $T$. If one is interested in estimating the influence function to a certain precision $\epsilon$ at infinite time, there is no need to use \textsc{DMPest} with a large artificial bound on $T$: in fact, one can save this factor in computational complexity, as stated in the following Observation. 

\begin{observation}[Influence estimation at infinite time]
    Influence at infinite time in the IC model can be estimated with an algorithmic complexity $O(\vert E \vert)$ only through the large time limit of DMP equations.
    \label{obs:DMPinf}
\end{observation}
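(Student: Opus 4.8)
\emph{Proof idea.} The plan is to show that the dynamic messages generated by \eqref{eq:p} converge, as $t\to\infty$, to a fixed point of an explicit system of $O(\vert E\vert)$ equations, that this convergence is monotone from below, and that one update sweep of these equations costs $O(\vert E\vert)$; the infinite-time algorithm \textsc{DMPinf} then merely iterates such sweeps until the running influence estimate stabilizes, so that no prescribed horizon $T$ ever enters the running time (in contrast to \textsc{DMPest}, whose cost is $O(\vert E\vert T)$ and which requires committing to a horizon).

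First I would prove monotonicity: for every directed edge $(j\rightarrow i)$ the sequence $\{\widehat{p}_{j\rightarrow i}(t)\}_{t\ge 0}$ is non-decreasing, by induction on $t$. For the base case, \eqref{eq:p} together with the initialization \eqref{eq:initialization} gives $\widehat{p}_{j\rightarrow i}(1)=1-[1-p_j(0)]\prod_{l\in\partial j\backslash i}[1-b_{lj}p_l(0)]\ge 1-[1-p_j(0)]=\widehat{p}_{j\rightarrow i}(0)$, since every factor of the product lies in $[0,1]$. For the inductive step, observe that the right-hand side of \eqref{eq:p} is non-decreasing in each argument $\widehat{p}_{l\rightarrow j}(t-1)$: each factor $1-b_{lj}\widehat{p}_{l\rightarrow j}$ is non-negative (as $b_{lj}\widehat{p}_{l\rightarrow j}\le 1$) and non-increasing, hence so is their product, hence $1-[1-p_j(0)]$ times that product is non-decreasing; applying this coordinatewise to $\widehat{p}_{l\rightarrow j}(t-1)\le\widehat{p}_{l\rightarrow j}(t)$ yields $\widehat{p}_{j\rightarrow i}(t)\le\widehat{p}_{j\rightarrow i}(t+1)$. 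The same form of \eqref{eq:p} keeps every message in $[0,1]$, so each sequence is monotone and bounded and therefore converges to a limit $\widehat{p}_{j\rightarrow i}^{\infty}\in[0,1]$; passing to the limit in the (continuous) right-hand side of \eqref{eq:p} shows the limits solve the stationarity equations $\widehat{p}_{j\rightarrow i}^{\infty}=1-[1-p_j(0)]\prod_{l\in\partial j\backslash i}[1-b_{lj}\widehat{p}_{l\rightarrow j}^{\infty}]$, and \eqref{eq:marginal_p} then gives $\widehat{p}_i(t)\uparrow\widehat{p}_i^{\infty}$ and $\sum_{i\in V}\widehat{p}_i(t)\uparrow\widehat{\sigma}_{\infty}(\mathbf{p}_0)$. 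In particular, passing to the limit in Theorem~\ref{th:bound} shows $\widehat{\sigma}_{\infty}(\mathbf{p}_0)$ is still an upper bound on the true infinite-time spread.

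Next I would check that a single update sweep is $O(\vert E\vert)$. Refreshing all $2\vert E\vert$ messages emanating from a node $j$ naively costs $O(\deg(j)^2)$, but one can compute the single product $\Pi_j=\prod_{l\in\partial j}[1-b_{lj}\widehat{p}_{l\rightarrow j}]$ in $O(\deg(j))$ operations and recover each outgoing message in $O(1)$ as $\widehat{p}_{j\rightarrow i}=1-[1-p_j(0)]\,\Pi_j/[1-b_{ij}\widehat{p}_{i\rightarrow j}]$, dealing with an at most one vanishing denominator by recomputing the corresponding product over $\partial j\backslash i$ directly (an extra $O(\deg(j))$ per node at worst); summing over nodes gives $O(\vert E\vert)$ per sweep, matching the cost of one Monte-Carlo run. \textsc{DMPinf} iterates such sweeps from \eqref{eq:initialization}; since $\sum_i\widehat{p}_i(t)$ increases monotonically to $\widehat{\sigma}_{\infty}$, its increments tend to $0$, and the loop terminates once the remaining gap is certified below the target precision $\epsilon$, after a number of sweeps that is finite and determined by $\epsilon$ and the mixing of the fixed-point iteration on the given instance rather than by any artificial time horizon, whence the $O(\vert E\vert)$ overall complexity.

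The main obstacle is exactly that last step: converting ``the monotone iteration converges'' into a certified $\epsilon$-accurate estimate with an iteration count that is controlled, ideally independent of $\vert V\vert$. A small increment of $\sum_i\widehat{p}_i(t)$ does not by itself bound the distance to $\widehat{\sigma}_{\infty}$, so a fully rigorous version needs either a geometric-convergence / contraction estimate for the fixed-point map near $\widehat{p}^{\infty}$ (available in the locally treelike or small-transmission regimes discussed in Section~\ref{sec:DMP}) or a companion monotone \emph{upper} envelope that, together with the lower envelope $\sum_i\widehat{p}_i(t)$, sandwiches $\widehat{\sigma}_{\infty}$ to within $\epsilon$ after a size-independent number of sweeps. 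The remaining ingredients — monotonicity, boundedness, existence of the fixed point, and the $O(\vert E\vert)$ cost per sweep — are the elementary manipulations above.
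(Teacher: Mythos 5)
Your proposal is correct in all the steps it actually carries out, and it is in fact considerably more rigorous than the paper's own treatment of this Observation. The paper simply takes the formal $t\to\infty$ limit of \eqref{eq:p} to write the fixed-point system \eqref{eq:p_infty} and asserts that it ``can be solved self-consistently by iteration until a tolerance $\epsilon$ is met''; it never proves that the finite-time messages converge or that a fixed point exists. Your monotone-convergence argument (messages non-decreasing in $t$ by induction, bounded in $[0,1]$, hence convergent, with the limit satisfying \eqref{eq:p_infty} by continuity) supplies exactly that missing justification, and the base case and inductive step are both sound. Your product-with-division trick for achieving $O(\vert E\vert)$ per sweep is also a genuine point the paper glosses over: a naive implementation of \eqref{eq:p} costs $O(\sum_j \deg(j)^2)$ per sweep, which is not $O(\vert E\vert)$ for graphs with high-degree nodes (the only nit is that more than one factor can vanish, in which case all cavity products over $\partial j\backslash i$ are zero anyway, so the handling still works). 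The obstacle you flag at the end --- that the stopping rule ``increment below $\epsilon$'' does not certify distance to the fixed point, and that the number of sweeps is instance-dependent rather than $O(1)$ --- is a real gap, but it is a gap in the Observation as stated, not in your argument: the paper's \textsc{DMPinf} uses precisely that stopping rule and offers no contraction estimate or iteration bound either, so the claimed $O(\vert E\vert)$ complexity should be read as per-sweep cost (the saving being the removal of the explicit factor $T$), which is all your proof needs to establish.
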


Indeed, taking the  $t \to \infty$ limit in equations \eqref{eq:p}-\eqref{eq:initialization}, we immediately get the following fixed point equations for conditional probabilities:
\begin{align}
\widehat{p}_{j \rightarrow i}(\infty)=1 - \left[ 1 - p_{j}(0) \right] \prod_{l\in \partial j \backslash i}\left[ 1 - b_{lj} \widehat{p}_{l \rightarrow j}(\infty) \right], \label{eq:p_infty}
\end{align}
These equations can be solved self-consistently, for example by iteration until a certain tolerance $\epsilon$ on the difference between subsequent updates is met. 
The marginal probabilities that are used to estimate the influence can be computed as follows:
\begin{align}
\widehat{p}_{i}(\infty) = 1 - \left[ 1 - p_{i}(0) \right] \prod_{j\in \partial i}\left[ 1 - b_{ji} \widehat{p}_{j \rightarrow i}(\infty) \right].
\label{eq:marginal_p_infty}
\end{align}
The pseudocode for the \textsc{DMPinf} algorithm based on equations that allows one to estimate influence at infinite time is given in Algorithm~\ref{alg:DMPinf}.

\begin{algorithm}[tb]
   \caption{\textsc{\textsc{DMPinf} ($G$, $\mathbf{b}$, $\mathbf{p_{0}}$, $\epsilon$)}}
\begin{algorithmic}[1]
   \STATEx {\bfseries Input:} Graph $G = (V,E)$, transition probabilities $\mathbf{b} = \{ b_{ij} \}_{(ij) \in E}$, initial conditions $\mathbf{p}_{0} = (p_{1}(0), \ldots, p_{\vert V \vert}(0))$, tol $\epsilon$
   \FOR{$(i,j) \in E$}
   \STATE Initialize $\widehat{p}_{i \rightarrow j}(\infty) = p_i(0)$
   \ENDFOR
   \WHILE{$\sum_{(j,i) \in E} \vert \widehat{p}^{\text{new}}_{j \rightarrow i}(\infty) - \widehat{p}^{\text{old}}_{j \rightarrow i}(\infty) \vert > \epsilon$}
   \FOR{$(i,j) \in E$}
   \STATE $\widehat{p}^{\text{new}}_{j \rightarrow i}(\infty) \leftarrow \{ \widehat{p}^{\text{old}}_{l \rightarrow j}(\infty) \}_{l\in \partial j \backslash i}$ through iteration \eqref{eq:p_infty}
   \ENDFOR
   \ENDWHILE
   \FOR{$i \in V$}
   \STATE Compute $\widehat{p}_{i}(\infty)$ using \eqref{eq:marginal_p_infty}
   \ENDFOR
   \STATE $\widehat{\sigma}_{\infty}(\mathbf{p}_{0}) \leftarrow \sum_{i \in V} \widehat{p}_i(\infty)$
   \STATE {\bfseries return} $\widehat{\sigma}_{\infty}(\mathbf{p}_{0})$
\end{algorithmic}
\label{alg:DMPinf}
\end{algorithm}

\section{Numerical Results}
\label{sec:Numerics}

In this section, our goal is to test the performance of \textsc{DMPest} and \textsc{DMPinf} in practice for real-world networks. Thanks to Theorem~\ref{th:tree}, we know that these algorithms are exact on tree graphs, so we do not present results on tree networks here. As explained in Section~\ref{sec:DMP}, existing heuristic approaches \cite{aggarwal2011flow, kimura2006tractable, zhou2015upper, liu2014influence, jung2012irie} do not provide exact answer even in the case of tree graphs. We will focus on testing \textsc{DMPest} and \textsc{DMPinf} on loopy graph instances; Monte Carlo simulations with a large sampling factor will be used as a benchmark tool to produce an estimate of the ground truth marginal probabilities. Notice that this is a more detailed information computed by our algorithms compared to a single sum representing the resulting influence \eqref{eq:exact_influence}.

We first illustrate the accuracy of \textsc{DMPinf} on a small real-world social network \cite{opsahl2009clustering}. We chose random transmission probabilities $\mathbf{b}$ distributed independently and uniformly at random for each edge in the interval $[0,1]$ in order to test the impact of heterogeneous parameters (that can be both arbitrarily small and large) on the accuracy of the message-passing procedure. The estimated marginal probabilities are plotted in Figure~\ref{fig:dmp-accuracy-Irvine}, Left against the ``ground truth'' obtained via $10^{6}$ Monte Carlo simulations for $k=20$ seeds chosen at random, and the topology of the network is sketched in Figure~\ref{fig:dmp-accuracy-Irvine}, Right.
It is remarkable that despite the presence of short loops, DMP shows an excellent agreement with the ground truth, saving a huge sampling factor for generating prediction compared to sampling approach.

In a synthetic experiment presented in Figure~\ref{fig:dmp-scaling}, we show that \textsc{DMPest} is indeed scalable to very large network instances with sizes beyond hundred of millions of nodes. Moreover, in agreement with the derivations in the Section~\ref{sec:Estimation}, we observe a linear-time scaling of \textsc{DMPest} with the number of edges in the network; in Figure~\ref{fig:dmp-scaling}, this is shown for the family of random regular graphs, where $\vert E \vert \propto \vert V \vert$. 


\begin{figure}[htb]
    \centering
    \includegraphics[width=0.48\columnwidth]{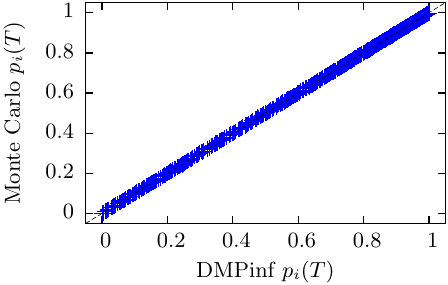}\includegraphics[width=0.37\columnwidth, angle=90]{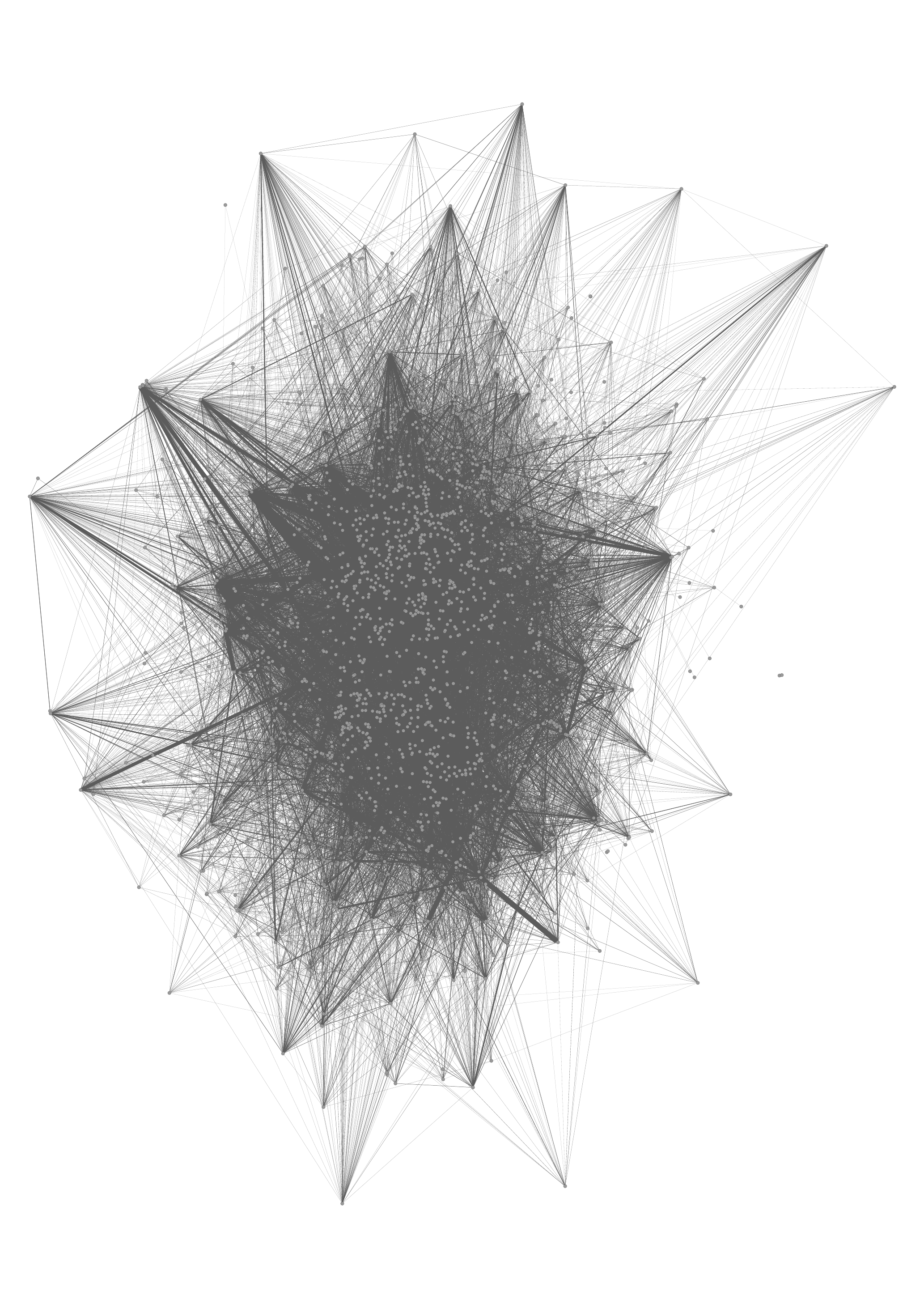}
    \caption{{\bf Left:} Scatter plot representing marginals predicted by \textsc{DMPinf} versus ``ground truth'' obtained by averaging $10^{6}$ Monte Carlo simulations for $T=\infty$ and $k=20$ randomly-selected seeds on a Facebook-like social network with $1899$ nodes and $20,296$ edges that represents an online community for students at University of California, Irvine \cite{opsahl2009clustering}. The per-node error $\Delta p_{i}(T) = \frac{1}{N} \sum_{j=1}^{N} \vert \widehat{p}_{j}(T) - p^{MC}_{j}(T) \vert$ is equal to $0.0044$ in this example. {\bf Right:} The topology of the social network used in simulations. In this representation, high-degree nodes are placed on the periphery. This network contains a large number of loops of short length.}
    \label{fig:dmp-accuracy-Irvine}
\end{figure}


\begin{figure}[htb]
    \centering
    \includegraphics[width=0.66\columnwidth]{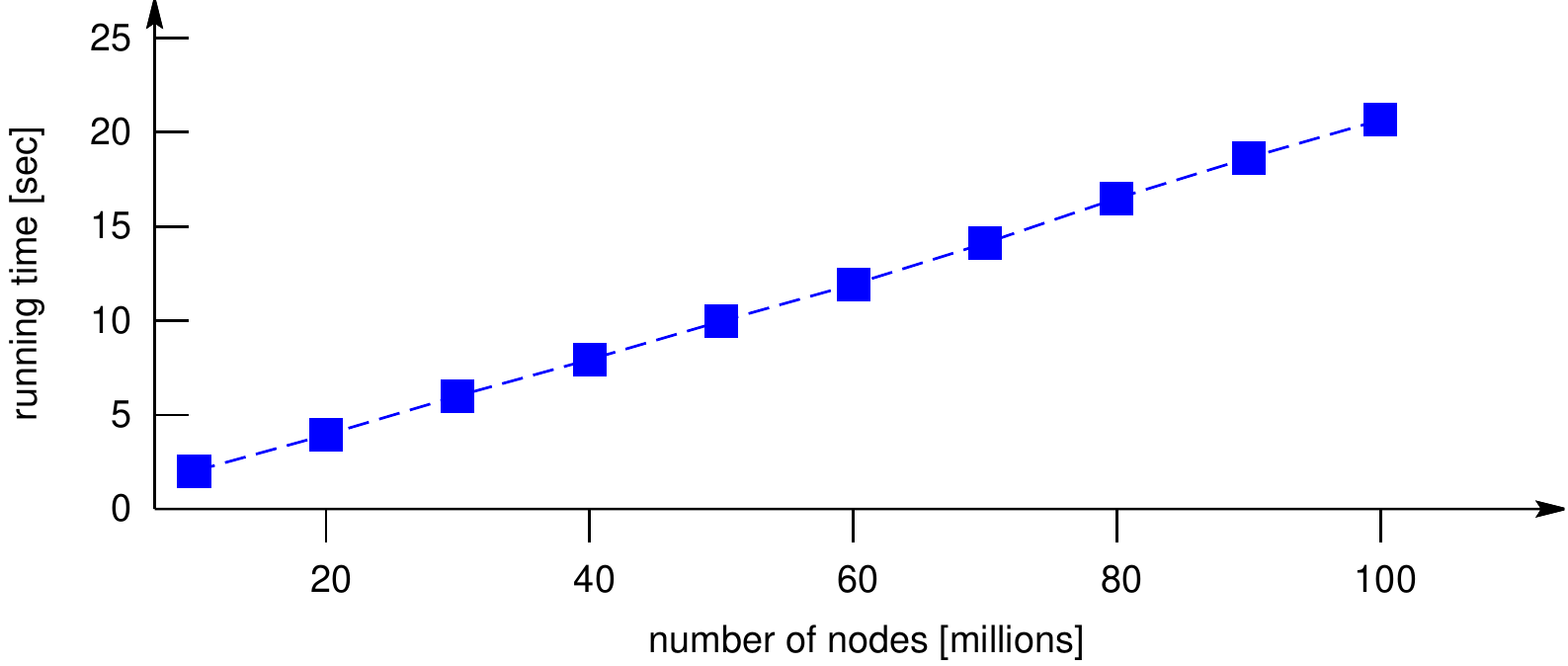}
    \caption{Linear computational time scaling of \textsc{DMPest} as a function of number of nodes obtained for $T=10$ on random regular graphs of degree $3$. Influence estimation on a graph of size $10^8$ takes only $20$ sec with \textsc{DMPest}, while MC simulations this task is essentially intractable.}
    \label{fig:dmp-scaling}
\end{figure}

\begin{table*}[hbt]
\caption{Test of accuracy of the \textsc{DMPest} algorithm on various real-world networks. The per-node error $\Delta p_{i}(T) = \frac{1}{N} \sum_{j=1}^{N} \vert \widehat{p}_{j}(T) - p^{MC}_{j}(T) \vert$ is given here in comparison with the marginal probability predictions obtained through $10^4$ Monte-Carlo runs. In these experiments, $b_{ij}$ are distributed randomly in $[0,0.1]$ on each edge $(ij) \in E$, $1\%$ ($0.01N$) of randomly placed seeds are selected at initial time, and $T=10$.}
\centering
\small
\setlength{\tabcolsep}{2pt}
\def\arraystretch{1.4}
\begin{tabular}{ |c|c|c|c|c|c| }
\hline 
Network & $N$ & $M$ & $\Delta p_{i}(T)$ & DMP & $10^4$ MC  \\
name  &  nodes & edges  & error & runtime  & runtime  \\ 
\hline
UC Irvine social \cite{opsahl2009clustering} & 1899 & 20,296 & 0.003002 & 0.08 sec & 10.2 sec \\
\hline
GR collaborations \cite{leskovec2007graph}  & 5242 & 14,484 & 0.009101 & 0.04 sec & 18.8 sec \\
\hline
Internet autonomous \cite{nr} & 22,963 & 48,436 & 0.001593 & 1.15 sec & 1.6 min \\
\hline
Gnutella P2P \cite{Ripeanu:2002:MGN:613352.613670} & 62,586 & 147,892 & 0.000363 & 0.7 sec & 5.9 min  \\
\hline
Web-sk graph \cite{boldi2004ubicrawler} & 121,422 & 334,419 & 0.002523 & 1.7 sec & 20.1 min \\
\hline
Amazon co-purchasing \cite{yang2015defining} & 262,111 & 899,792 & 0.000469 & 6.1 sec & 68.3 min \\
\hline
\end{tabular}
\end{table*}

In Table~1, we provide an extensive benchmarking of the accuracy of \textsc{DMPest} on a number of real-world social and web networks.
For these tests, we had to limit the size of networks considered by hundred of thousands, in order to be able to run at least $10^4$ Monte-Carlo simulations for estimating the marginal probabilities. We see that \textsc{DMPest} yields impressively accurate results even on these graphs with loops. Notice that the sampling-based approach becomes prohibitively expensive already for graphs with tens of thousands of nodes, which makes it hard to use them in applications where influence estimation subroutine needs to be called many times.
At the same time, \textsc{DMPest} requires only a single run of the DMP equations, and thus results in extremely small running times for these real instances.

\section{Conclusions}
\label{sec:Conclusions}

In this paper, we presented an analytical approach to influence estimation based on dynamic message-passing approach. This method provides an estimation of the influence function with an algorithmic complexity $O(\vert E \vert T)$ for finite-time horizon problems and $O(\vert E \vert)$ for influence estimation at infinite time, which makes it possible to apply the developed algorithms to large-scale problems. Importantly, developed algorithms provide an exact estimation of the influence function on tree graphs, and an upper bound on the influence value on general loopy graphs. DMP-based algorithms should be especially accurate on sparse locally treelike graphs due to diverging size of loops. These practical aspects of the DMP algorithm were at the focus of this work: demonstration of an excellent performance of the algorithm on a variety of real-world network instances, both in terms of the quality of predictions and of the computational complexity.  

Due to the upper bounds provided by \textsc{DMPest} and \textsc{DMPinf}, these algorithms can be straightforwardly used in applications where estimation needs to be called many times, e.g. for pruning nodes that have a week influence spreading potential. Corollary~\ref{corollary} suggests an even more interesting use of \textsc{DMPest} in conjunction with MC sampling methods. Indeed, we know that at finite times \textsc{DMPest} will provide exact influence estimation for nodes that have a treelike neighborhood, and will only make a mistakes in the regions with loops. Therefore, it is possible to use \textsc{DMPest} to save a potentially significant number of samples for providing very accurate influence estimation in sparse regions, and use MC sampling in dense regions of the graph. Moreover, lower bounds that can be obtained in practice by running \textsc{DMPest} and \textsc{DMPinf} on spanning trees of the graph, can further reduce the use of MC simulations if the DMP-estimated marginals appear to be tight. Empirical exploration of these directions is left for future work.

An explicit algorithmic form of the influence functions paves a way towards developing new heuristic DMP-based optimization algorithms for influence maximization problems, similar to how message-passing equations have been used for other models in previous work \cite{altarelli2013large,LokhovE8138}. In particular, it would be interesting to explore the settings of non-deterministic seeding as DMP equations are valid for arbitrary factorized probabilistic initial condition, which generalizes the case of fixed seeds. Future work should also focus on the development of robust DMP-based framework for dealing with uncertainty in parameters, which should be particularly useful for applications to the robust version of the influence maximization problem \cite{He:2016:RIM:2939672.2939760,Chen:2016:RIM:2939672.2939745}.

\acks{The research presented in this work was supported by the Laboratory Directed Research and Development program of Los Alamos National Laboratory under project number 20200121ER.}  

\bibliography{biblio}

\begin{thebibliography}{61}
\providecommand{\natexlab}[1]{#1}
\providecommand{\url}[1]{\texttt{#1}}
\expandafter\ifx\csname urlstyle\endcsname\relax
  \providecommand{\doi}[1]{doi: #1}\else
  \providecommand{\doi}{doi: \begingroup \urlstyle{rm}\Url}\fi

\bibitem[Abbe et~al.(2017)Abbe, Kulkarni, and Lee]{abbe2017nonbacktracking}
Emmanuel Abbe, Sanjeev Kulkarni, and Eun~Jee Lee.
\newblock Nonbacktracking bounds on the influence in independent cascade
  models.
\newblock In \emph{Advances in Neural Information Processing Systems}, pages
  1407--1416, 2017.

\bibitem[Aggarwal et~al.(2011)Aggarwal, Khan, and Yan]{aggarwal2011flow}
Charu~C Aggarwal, Arijit Khan, and Xifeng Yan.
\newblock On flow authority discovery in social networks.
\newblock In \emph{Proceedings of the 2011 SIAM International Conference on
  Data Mining}, pages 522--533. SIAM, 2011.

\bibitem[Altarelli et~al.(2014)Altarelli, Braunstein, Dall'Asta, Wakeling, and
  Zecchina]{PhysRevX.4.021024}
F.~Altarelli, A.~Braunstein, L.~Dall'Asta, J.~R. Wakeling, and R.~Zecchina.
\newblock Containing epidemic outbreaks by message-passing techniques.
\newblock \emph{Phys. Rev. X}, 4:\penalty0 021024, May 2014.

\bibitem[Altarelli et~al.(2013)Altarelli, Braunstein, Dall’Asta, and
  Zecchina]{altarelli2013large}
Fabrizio Altarelli, Alfredo Braunstein, Luca Dall’Asta, and Riccardo
  Zecchina.
\newblock Large deviations of cascade processes on graphs.
\newblock \emph{Physical Review E}, 87\penalty0 (6):\penalty0 062115, 2013.

\bibitem[Armstrong(1993)]{Chebyshev}
Thomas~E. Armstrong.
\newblock Chebyshev inequalities and comonotonicity.
\newblock \emph{Real Analysis Exchange}, 19\penalty0 (1):\penalty0 266--268,
  1993.
\newblock ISSN 01471937, 19301219.

\bibitem[Boldi et~al.(2004)Boldi, Codenotti, Santini, and
  Vigna]{boldi2004ubicrawler}
Paolo Boldi, Bruno Codenotti, Massimo Santini, and Sebastiano Vigna.
\newblock Ubicrawler: A scalable fully distributed web crawler.
\newblock \emph{Software: Practice and Experience}, 34\penalty0 (8):\penalty0
  711--726, 2004.

\bibitem[Borgs et~al.(2014)Borgs, Brautbar, Chayes, and
  Lucier]{Borgs:2014:MSI:2634074.2634144}
Christian Borgs, Michael Brautbar, Jennifer Chayes, and Brendan Lucier.
\newblock Maximizing social influence in nearly optimal time.
\newblock In \emph{Proceedings of the Twenty-fifth Annual ACM-SIAM Symposium on
  Discrete Algorithms}, SODA '14, pages 946--957, Philadelphia, PA, USA, 2014.
  Society for Industrial and Applied Mathematics.
\newblock ISBN 978-1-611973-38-9.

\bibitem[Burkholz(2019)]{burkholz2019efficient}
Rebekka Burkholz.
\newblock Efficient message passing for cascade size distributions.
\newblock \emph{Scientific reports}, 9\penalty0 (1):\penalty0 6561, 2019.

\bibitem[Burkholz and Quackenbush(2019)]{burkholz2019cascade}
Rebekka Burkholz and John Quackenbush.
\newblock Cascade size distributions and why they matter.
\newblock \emph{arXiv preprint arXiv:1909.05416}, 2019.

\bibitem[Chen et~al.(2009)Chen, Wang, and Yang]{Chen:2009:EIM:1557019.1557047}
Wei Chen, Yajun Wang, and Siyu Yang.
\newblock Efficient influence maximization in social networks.
\newblock In \emph{Proceedings of the 15th ACM SIGKDD International Conference
  on Knowledge Discovery and Data Mining}, KDD '09, pages 199--208, New York,
  NY, USA, 2009. ACM.
\newblock ISBN 978-1-60558-495-9.

\bibitem[Chen et~al.(2010)Chen, Wang, and Wang]{chen:2010:SIM:1835804.1835934}
Wei Chen, Chi Wang, and Yajun Wang.
\newblock Scalable influence maximization for prevalent viral marketing in
  large-scale social networks.
\newblock In \emph{Proceedings of the 16th ACM SIGKDD International Conference
  on Knowledge Discovery and Data Mining}, KDD '10, pages 1029--1038, New York,
  NY, USA, 2010. ACM.
\newblock ISBN 978-1-4503-0055-1.

\bibitem[Chen et~al.(2013)Chen, Lakshmanan, and
  Castillo]{doi:10.2200/S00527ED1V01Y201308DTM037}
Wei Chen, Laks~V.S. Lakshmanan, and Carlos Castillo.
\newblock Information and influence propagation in social networks.
\newblock \emph{Synthesis Lectures on Data Management}, 5\penalty0
  (4):\penalty0 1--177, 2013.

\bibitem[Chen et~al.(2016)Chen, Lin, Tan, Zhao, and
  Zhou]{Chen:2016:RIM:2939672.2939745}
Wei Chen, Tian Lin, Zihan Tan, Mingfei Zhao, and Xuren Zhou.
\newblock Robust influence maximization.
\newblock In \emph{Proceedings of the 22Nd ACM SIGKDD International Conference
  on Knowledge Discovery and Data Mining}, KDD '16, pages 795--804, New York,
  NY, USA, 2016. ACM.
\newblock ISBN 978-1-4503-4232-2.

\bibitem[Cohen et~al.(2014)Cohen, Delling, Pajor, and Werneck]{cohen2014sketch}
Edith Cohen, Daniel Delling, Thomas Pajor, and Renato~F Werneck.
\newblock Sketch-based influence maximization and computation: Scaling up with
  guarantees.
\newblock In \emph{Proceedings of the 23rd ACM International Conference on
  Conference on Information and Knowledge Management}, pages 629--638. ACM,
  2014.

\bibitem[Dobson et~al.(2007)Dobson, Carreras, Lynch, and
  Newman]{dobson2007complex}
Ian Dobson, Benjamin~A Carreras, Vickie~E Lynch, and David~E Newman.
\newblock Complex systems analysis of series of blackouts: Cascading failure,
  critical points, and self-organization.
\newblock \emph{Chaos: An Interdisciplinary Journal of Nonlinear Science},
  17\penalty0 (2):\penalty0 026103, 2007.

\bibitem[Domingos and Richardson(2001)]{Domingos:2001:MNV:502512.502525}
Pedro Domingos and Matt Richardson.
\newblock Mining the network value of customers.
\newblock In \emph{Proceedings of the Seventh ACM SIGKDD International
  Conference on Knowledge Discovery and Data Mining}, KDD '01, pages 57--66,
  New York, NY, USA, 2001. ACM.
\newblock ISBN 1-58113-391-X.

\bibitem[Draief et~al.(2006)Draief, Ganesh, and
  Massouli{\'e}]{draief2006thresholds}
Moez Draief, Ayalvadi Ganesh, and Laurent Massouli{\'e}.
\newblock Thresholds for virus spread on networks.
\newblock In \emph{Proceedings of the 1st international conference on
  Performance evaluation methodolgies and tools}, page~51. ACM, 2006.

\bibitem[Du et~al.(2013)Du, Song, Rodriguez, and Zha]{du2013scalable}
Nan Du, Le~Song, Manuel~Gomez Rodriguez, and Hongyuan Zha.
\newblock Scalable influence estimation in continuous-time diffusion networks.
\newblock In \emph{Advances in neural information processing systems}, pages
  3147--3155, 2013.

\bibitem[Du et~al.(2014)Du, Liang, Balcan, and Song]{du2014influence}
Nan Du, Yingyu Liang, Maria Balcan, and Le~Song.
\newblock Influence function learning in information diffusion networks.
\newblock In \emph{International Conference on Machine Learning}, pages
  2016--2024, 2014.

\bibitem[Goldenberg et~al.(2001)Goldenberg, Libai, and
  Muller]{goldenberg2001talk}
Jacob Goldenberg, Barak Libai, and Eitan Muller.
\newblock Talk of the network: A complex systems look at the underlying process
  of word-of-mouth.
\newblock \emph{Marketing letters}, 12\penalty0 (3):\penalty0 211--223, 2001.

\bibitem[Goyal et~al.(2011)Goyal, Lu, and Lakshmanan]{goyal2011celf++}
Amit Goyal, Wei Lu, and Laks~VS Lakshmanan.
\newblock Celf++: optimizing the greedy algorithm for influence maximization in
  social networks.
\newblock In \emph{Proceedings of the 20th international conference companion
  on World wide web}, pages 47--48. ACM, 2011.

\bibitem[Granovetter(1978)]{granovetter1978threshold}
Mark Granovetter.
\newblock Threshold models of collective behavior.
\newblock \emph{American journal of sociology}, 83\penalty0 (6):\penalty0
  1420--1443, 1978.

\bibitem[He and Kempe(2016)]{He:2016:RIM:2939672.2939760}
Xinran He and David Kempe.
\newblock Robust influence maximization.
\newblock In \emph{Proceedings of the 22Nd ACM SIGKDD International Conference
  on Knowledge Discovery and Data Mining}, KDD '16, pages 885--894, New York,
  NY, USA, 2016. ACM.
\newblock ISBN 978-1-4503-4232-2.

\bibitem[Hethcote(2000)]{hethcote2000mathematics}
Herbert~W Hethcote.
\newblock The mathematics of infectious diseases.
\newblock \emph{SIAM review}, 42\penalty0 (4):\penalty0 599--653, 2000.

\bibitem[Jung et~al.(2012)Jung, Heo, and Chen]{jung2012irie}
Kyomin Jung, Wooram Heo, and Wei Chen.
\newblock Irie: Scalable and robust influence maximization in social networks.
\newblock In \emph{2012 IEEE 12th International Conference on Data Mining},
  pages 918--923. IEEE, 2012.

\bibitem[Karrer and Newman(2010)]{PhysRevE.82.016101}
Brian Karrer and M.~E.~J. Newman.
\newblock Message passing approach for general epidemic models.
\newblock \emph{Phys. Rev. E}, 82:\penalty0 016101, Jul 2010.

\bibitem[Kempe et~al.(2003)Kempe, Kleinberg, and Tardos]{kempe2003maximizing}
David Kempe, Jon Kleinberg, and {\'E}va Tardos.
\newblock Maximizing the spread of influence through a social network.
\newblock In \emph{Proceedings of the ninth ACM SIGKDD international conference
  on Knowledge discovery and data mining}, pages 137--146, New York, NY, USA,
  2003. ACM.

\bibitem[Kempe et~al.(2015)Kempe, Kleinberg, and Tardos]{v011a004}
David Kempe, Jon Kleinberg, and \'{E}va Tardos.
\newblock Maximizing the spread of influence through a social network.
\newblock \emph{Theory of Computing}, 11\penalty0 (4):\penalty0 105--147, 2015.

\bibitem[Khim et~al.(2016)Khim, Jog, and Loh]{khim2016computing}
Justin~T Khim, Varun Jog, and Po-Ling Loh.
\newblock Computing and maximizing influence in linear threshold and triggering
  models.
\newblock In \emph{Advances in Neural Information Processing Systems}, pages
  4538--4546, 2016.

\bibitem[Kimura and Saito(2006)]{kimura2006tractable}
Masahiro Kimura and Kazumi Saito.
\newblock Tractable models for information diffusion in social networks.
\newblock In \emph{European Conference on Principles of Data Mining and
  Knowledge Discovery}, pages 259--271. Springer, 2006.

\bibitem[Krzakala et~al.(2013)Krzakala, Moore, Mossel, Neeman, Sly,
  Zdeborov{\'a}, and Zhang]{krzakala2013spectral}
Florent Krzakala, Cristopher Moore, Elchanan Mossel, Joe Neeman, Allan Sly,
  Lenka Zdeborov{\'a}, and Pan Zhang.
\newblock Spectral redemption in clustering sparse networks.
\newblock \emph{Proceedings of the National Academy of Sciences}, 110\penalty0
  (52):\penalty0 20935--20940, 2013.

\bibitem[Lee et~al.(2016)Lee, Kamath, Abbe, and Kulkarni]{lee2016spectral}
Eun~Jee Lee, Sudeep Kamath, Emmanuel Abbe, and Sanjeev~R Kulkarni.
\newblock Spectral bounds for independent cascade model with sensitive edges.
\newblock In \emph{2016 Annual Conference on Information Science and Systems
  (CISS)}, pages 649--653. IEEE, 2016.

\bibitem[Lemonnier et~al.(2014)Lemonnier, Scaman, and
  Vayatis]{lemonnier2014tight}
R{\'e}mi Lemonnier, Kevin Scaman, and Nicolas Vayatis.
\newblock Tight bounds for influence in diffusion networks and application to
  bond percolation and epidemiology.
\newblock In \emph{Advances in Neural Information Processing Systems}, pages
  846--854, 2014.

\bibitem[Leskovec et~al.(2007{\natexlab{a}})Leskovec, Kleinberg, and
  Faloutsos]{leskovec2007graph}
Jure Leskovec, Jon Kleinberg, and Christos Faloutsos.
\newblock Graph evolution: Densification and shrinking diameters.
\newblock \emph{ACM Transactions on Knowledge Discovery from Data (TKDD)},
  1\penalty0 (1):\penalty0 2, 2007{\natexlab{a}}.

\bibitem[Leskovec et~al.(2007{\natexlab{b}})Leskovec, Krause, Guestrin,
  Faloutsos, VanBriesen, and Glance]{Leskovec:2007:COD:1281192.1281239}
Jure Leskovec, Andreas Krause, Carlos Guestrin, Christos Faloutsos, Jeanne
  VanBriesen, and Natalie Glance.
\newblock Cost-effective outbreak detection in networks.
\newblock In \emph{Proceedings of the 13th ACM SIGKDD International Conference
  on Knowledge Discovery and Data Mining}, KDD '07, pages 420--429, New York,
  NY, USA, 2007{\natexlab{b}}. ACM.
\newblock ISBN 978-1-59593-609-7.

\bibitem[Liu et~al.(2014)Liu, Xiang, Chen, Xiong, Tang, and
  Yu]{liu2014influence}
Qi~Liu, Biao Xiang, Enhong Chen, Hui Xiong, Fangshuang Tang, and Jeffrey~Xu Yu.
\newblock Influence maximization over large-scale social networks: A bounded
  linear approach.
\newblock In \emph{Proceedings of the 23rd ACM International Conference on
  Conference on Information and Knowledge Management}, pages 171--180. ACM,
  2014.

\bibitem[Lokhov(2016)]{lokhov2016reconstructing}
Andrey Lokhov.
\newblock Reconstructing parameters of spreading models from partial
  observations.
\newblock In \emph{Advances in Neural Information Processing Systems}, pages
  3467--3475, 2016.

\bibitem[Lokhov and Misiakiewicz(2015)]{lokhov2015efficient}
Andrey~Y Lokhov and Theodor Misiakiewicz.
\newblock Efficient reconstruction of transmission probabilities in a spreading
  process from partial observations.
\newblock \emph{arXiv preprint arXiv:1509.06893}, 2015.

\bibitem[Lokhov and Saad(2017)]{LokhovE8138}
Andrey~Y. Lokhov and David Saad.
\newblock Optimal deployment of resources for maximizing impact in spreading
  processes.
\newblock \emph{Proceedings of the National Academy of Sciences}, 114\penalty0
  (39):\penalty0 E8138--E8146, 2017.
\newblock ISSN 0027-8424.

\bibitem[Lokhov et~al.(2014)Lokhov, M{\'e}zard, Ohta, and
  Zdeborov{\'a}]{lokhov2014inferring}
Andrey~Y Lokhov, Marc M{\'e}zard, Hiroki Ohta, and Lenka Zdeborov{\'a}.
\newblock Inferring the origin of an epidemic with a dynamic message-passing
  algorithm.
\newblock \emph{Physical Review E}, 90\penalty0 (1):\penalty0 012801, 2014.

\bibitem[Lokhov et~al.(2015)Lokhov, M\'ezard, and
  Zdeborov\'a]{PhysRevE.91.012811}
Andrey~Y. Lokhov, Marc M\'ezard, and Lenka Zdeborov\'a.
\newblock Dynamic message-passing equations for models with unidirectional
  dynamics.
\newblock \emph{Phys. Rev. E}, 91:\penalty0 012811, Jan 2015.

\bibitem[Lucier et~al.(2015)Lucier, Oren, and Singer]{lucier2015influence}
Brendan Lucier, Joel Oren, and Yaron Singer.
\newblock Influence at scale: Distributed computation of complex contagion in
  networks.
\newblock In \emph{Proceedings of the 21th ACM SIGKDD International Conference
  on Knowledge Discovery and Data Mining}, pages 735--744. ACM, 2015.

\bibitem[Mezard and Montanari(2009)]{Mezard:2009:IPC:1592967}
Marc Mezard and Andrea Montanari.
\newblock \emph{Information, Physics, and Computation}.
\newblock Oxford University Press, Inc., New York, NY, USA, 2009.
\newblock ISBN 019857083X, 9780198570837.

\bibitem[Miller(2011)]{miller2011note}
Joel~C Miller.
\newblock A note on a paper by erik volz: Sir dynamics in random networks.
\newblock \emph{Journal of mathematical biology}, 62\penalty0 (3):\penalty0
  349--358, 2011.

\bibitem[Narasimhan et~al.(2015)Narasimhan, Parkes, and Singer]{NIPS2015_5989}
Harikrishna Narasimhan, David~C Parkes, and Yaron Singer.
\newblock Learnability of influence in networks.
\newblock In C.~Cortes, N.~D. Lawrence, D.~D. Lee, M.~Sugiyama, and R.~Garnett,
  editors, \emph{Advances in Neural Information Processing Systems 28}, pages
  3186--3194. Curran Associates, Inc., 2015.

\bibitem[Nguyen et~al.(2017)Nguyen, Nguyen, Vu, and Dinh]{nguyen2017outward}
Hung~T Nguyen, Tri~P Nguyen, Tam~N Vu, and Thang~N Dinh.
\newblock Outward influence and cascade size estimation in billion-scale
  networks.
\newblock \emph{Proceedings of the ACM on Measurement and Analysis of Computing
  Systems}, 1\penalty0 (1):\penalty0 20, 2017.

\bibitem[Ohta and Sasa(2010)]{ohta2010universal}
Hiroki Ohta and Shin-ichi Sasa.
\newblock A universal form of slow dynamics in zero-temperature random-field
  ising model.
\newblock \emph{EPL (Europhysics Letters)}, 90\penalty0 (2):\penalty0 27008,
  2010.

\bibitem[Opsahl and Panzarasa(2009)]{opsahl2009clustering}
Tore Opsahl and Pietro Panzarasa.
\newblock Clustering in weighted networks.
\newblock \emph{Social networks}, 31\penalty0 (2):\penalty0 155--163, 2009.

\bibitem[Ripeanu et~al.(2002)Ripeanu, Iamnitchi, and
  Foster]{Ripeanu:2002:MGN:613352.613670}
Matei Ripeanu, Adriana Iamnitchi, and Ian Foster.
\newblock Mapping the gnutella network.
\newblock \emph{IEEE Internet Computing}, 6\penalty0 (1):\penalty0 50--57,
  January 2002.
\newblock ISSN 1089-7801.

\bibitem[Rossi and Ahmed(2015)]{nr}
Ryan~A. Rossi and Nesreen~K. Ahmed.
\newblock The network data repository with interactive graph analytics and
  visualization.
\newblock In \emph{Proceedings of the Twenty-Ninth AAAI Conference on
  Artificial Intelligence}, 2015.
\newblock URL \url{http://networkrepository.com}.

\bibitem[Shah and Zaman(2011)]{shah2011rumors}
Devavrat Shah and Tauhid Zaman.
\newblock Rumors in a network: Who's the culprit?
\newblock \emph{IEEE Transactions on information theory}, 57\penalty0
  (8):\penalty0 5163--5181, 2011.

\bibitem[Shrestha and Moore(2014)]{shrestha2014message}
Munik Shrestha and Cristopher Moore.
\newblock Message-passing approach for threshold models of behavior in
  networks.
\newblock \emph{Physical Review E}, 89\penalty0 (2):\penalty0 022805, 2014.

\bibitem[Tang et~al.(2014)Tang, Xiao, and Shi]{Tang:2014:IMN:2588555.2593670}
Youze Tang, Xiaokui Xiao, and Yanchen Shi.
\newblock Influence maximization: Near-optimal time complexity meets practical
  efficiency.
\newblock In \emph{Proceedings of the 2014 ACM SIGMOD International Conference
  on Management of Data}, SIGMOD '14, pages 75--86, New York, NY, USA, 2014.
  ACM.
\newblock ISBN 978-1-4503-2376-5.

\bibitem[Tang et~al.(2015)Tang, Shi, and Xiao]{Tang:2015:IMN:2723372.2723734}
Youze Tang, Yanchen Shi, and Xiaokui Xiao.
\newblock Influence maximization in near-linear time: A martingale approach.
\newblock In \emph{Proceedings of the 2015 ACM SIGMOD International Conference
  on Management of Data}, SIGMOD '15, pages 1539--1554, New York, NY, USA,
  2015. ACM.
\newblock ISBN 978-1-4503-2758-9.

\bibitem[Volz(2008)]{volz2008sir}
Erik Volz.
\newblock Sir dynamics in random networks with heterogeneous connectivity.
\newblock \emph{Journal of mathematical biology}, 56\penalty0 (3):\penalty0
  293--310, 2008.

\bibitem[Wainwright and Jordan(2008)]{Wainwright:2008:GME:1498840.1498841}
Martin~J. Wainwright and Michael~I. Jordan.
\newblock Graphical models, exponential families, and variational inference.
\newblock \emph{Found. Trends Mach. Learn.}, 1\penalty0 (1-2):\penalty0 1--305,
  January 2008.
\newblock ISSN 1935-8237.

\bibitem[Yang and Leskovec(2015)]{yang2015defining}
Jaewon Yang and Jure Leskovec.
\newblock Defining and evaluating network communities based on ground-truth.
\newblock \emph{Knowledge and Information Systems}, 42\penalty0 (1):\penalty0
  181--213, 2015.

\bibitem[Yang et~al.(2012)Yang, Chen, Liu, Xiang, Xu, and
  Shad]{yang2012approximation}
Yu~Yang, Enhong Chen, Qi~Liu, Biao Xiang, Tong Xu, and Shafqat~Ali Shad.
\newblock On approximation of real-world influence spread.
\newblock In \emph{Joint European Conference on Machine Learning and Knowledge
  Discovery in Databases}, pages 548--564. Springer, 2012.

\bibitem[Zhang et~al.(2013)Zhang, Dai, Ding, and Chen]{zhang2013probabilistic}
Miao Zhang, Chunni Dai, Chris Ding, and Enhong Chen.
\newblock Probabilistic solutions of influence propagation on social networks.
\newblock In \emph{Proceedings of the 22nd ACM international conference on
  Information \& Knowledge Management}, pages 429--438. ACM, 2013.

\bibitem[Zhou et~al.(2013)Zhou, Zhang, Guo, Zhu, and Guo]{zhou2013ublf}
Chuan Zhou, Peng Zhang, Jing Guo, Xingquan Zhu, and Li~Guo.
\newblock Ublf: An upper bound based approach to discover influential nodes in
  social networks.
\newblock In \emph{Data Mining (ICDM), 2013 IEEE 13th International Conference
  on}, pages 907--916. IEEE, 2013.

\bibitem[Zhou et~al.(2015)Zhou, Zhang, Zang, and Guo]{zhou2015upper}
Chuan Zhou, Peng Zhang, Wenyu Zang, and Li~Guo.
\newblock On the upper bounds of spread for greedy algorithms in social network
  influence maximization.
\newblock \emph{IEEE Transactions on Knowledge and Data Engineering},
  27\penalty0 (10):\penalty0 2770--2783, 2015.

\end{thebibliography}

\appendix

\section{Dynamic message-passing equations for the stochastic linear threshold model}
\label{app:LTM}

As discussed in Section~\ref{sec:Problem}, another popular model considered in the context of influence estimation and maximization is the Linear Threshold model. The deterministic version of LT model described in does not present any difficulty from the influence estimation perspective, as running a single simulation is sufficient for evaluating the influence function. However, an algorithm that allows for an analytical estimation of the influence even in the deterministic case can still be relevant for inference, optimization, or learning applications; message-passing equations for the this version of the LT model appeared in \cite{altarelli2013large}, and have essentially the same structure as dynamical equations previously derived for an equivalent (in a certain limit) model from statistical physics, zero-temperature random field Ising model ($T=0$ RFIM), see \cite{ohta2010universal} for more details.  Here, we consider a stochastic version of this model, where the activation of node $i$ happens with probability $\eta_{i}$ when the threshold condition $\sum_{j \in \partial i} b_{ji} x_{j} \geq \theta_i$ is satisfied, and for potentially stochastic initial condition, similarly to the setting discussed above for the IC model. Dynamic message-passing equations for this generalized model have been studied in \cite{shrestha2014message} for continuous time and in \cite{PhysRevE.91.012811} for discrete time (in the form of the equivalent $T=0$ RFIM). Here, for completeness we state the DMP equations for the discrete-time LT model as defined above. The notations are equivalent to the ones used in the DMP equations for the IC model.
\begin{align}
    \nonumber
    p_{i}(t+1) & = (1 - \eta_{i}) p_{i}(t) 
    \\
    & + \eta_{i} \sum_{\{x_k\}_{k \in \partial i}} \hspace{-0.2cm} \mathbbm{1} \left[ \sum_{k \in \partial i} b_{ki}x_k \geq \theta_i \right] \prod_{k \in \partial i: x_k=1} \hspace{-0.1cm} p_{k \to i}(t) \hspace{-0.1cm} \prod_{k \in \partial i: x_k=0} \hspace{-0.1cm} \left[ 1 - p_{k \to i}(t) \right];
    \label{eq:LTM_marginal}
    \\
    \nonumber
    p_{i \to j}(t+1) & = (1 - \eta_{i}) p_{i \to j}(t) 
    \\
    & + \eta_{i} \sum_{\{x_k\}_{k \in \partial i \backslash j}} \hspace{-0.2cm} \mathbbm{1} \left[ \sum_{k \in \partial i \backslash j} b_{ki}x_k \geq \theta_i \right] \prod_{k \in \partial i \backslash j: x_k=1} \hspace{-0.1cm} p_{k \to i}(t) \hspace{-0.1cm} \prod_{k \in \partial i \backslash j: x_k=0} \hspace{-0.1cm} \left[ 1 - p_{k \to i}(t) \right],
    \label{eq:LTM_message}
\end{align}
supplemented with the initial condition $p_{i \to j}(0) = p_{i}(0)$ for all $i$ and $j$, where $p_{i}(0)$ is a (in general stochastic) initial condition for the node $i$.

Similarly to the case of the DMP equations for the IC model, equations \eqref{eq:LTM_marginal}-\eqref{eq:LTM_message} are exact on tree graphs, see \cite{PhysRevE.91.012811} for details. Interestingly, it is shown in \cite{shrestha2014message} that no upper bound through mechanism discussed in Theorem~\ref{th:bound} exists for the stochastic LT model, although empirical evaluation of the predictions of the DMP equations still shows good agreement when compared to the marginal probabilities of the model. The work \cite{khim2016computing} derives bounds for the LT model based on the spectral bound approaches, similar to the ones for the IC model discussed in the Introduction.

\section{Proof of Theorem 2}
\label{app:proof_theorem2}

Before proceeding with the proof of the Theorem~\ref{th:bound}, let us give an equivalent representation for the probability \eqref{eq:exact_marginal_p} in terms of a live-edge graph:
\begin{equation}
    q_{i}(t) = \left\langle \prod_{l \in \mathcal{N}^i_t[\mathbf{d}]} \left[1 - p_{l}(0) \right] \right\rangle
    \label{eq:q_counting}
\end{equation}
Here, $\mathcal{N}^i_t[\mathbf{d}]$ denotes the set of nodes from which node $i$ is reachable in time $t$ given a particular realization of $\mathbf{d}$, and $\left\langle \cdot \right\rangle$ is an average with respect to the realizations of $\mathbf{d}$. Equation \eqref{eq:q_counting} has the following meaning: the probability that node $i$ did not get activated by time $t$ is given by the average over realizations in which all nodes that are reachable from $i$ were not active at initial time.

Similarly, in the live-edge representation, the probability $q_{j \rightarrow i}(t)$ can be expressed as follows:
\begin{equation}
    q_{j \rightarrow i}(t) = \left\langle \prod_{l \in \mathcal{N}^{i \leftarrow j}_t[\mathbf{d}]} \left[1 - p_{l}(0) \right] \right\rangle,
    \label{eq:q_message_counting}
\end{equation}
where $\mathcal{N}^{i \leftarrow j}_t[\mathbf{d}]$ is the set of nodes from which node $i$ is reachable in time $t$ given a particular realization of $\mathbf{d}$ with an additional constraint that the corresponding path ends on the edge $(ji) \in E$. Recall that by our definition we exclude paths that cross $i$: first arriving from another neighbor $l \in \partial i$ and $l \neq j$, reaching $j$ and then coming back to $i$, which is consistent with $q_{j \rightarrow i}(t)$ being conditioned on $i$ being non-active. Alternatively, one can think of $\mathcal{N}^{i \leftarrow j}_t[\mathbf{d}]$ as a reachable set defined on a \emph{cavity} graph where all edges outgoing from $i$ have been deleted.  

The proof of Theorem \ref{th:bound} also makes use of the following two Lemmas that we state below.

\begin{lemma}[Chebyshev integral inequality]
Let $f_1(x_1, \ldots, x_n)$, $\ldots$, $f_m(x_1, \ldots, x_n)$ be comonotonic functions, i.e. simultaneously non-increasing or non-decreasing in each of their arguments. Then
\begin{equation}
    \left\langle \prod_{i=1}^{m} f_i(x_1, \ldots, x_n) \right\rangle \geq \prod_{i=1}^{m} \left\langle f_i(x_1, \ldots, x_n) \right\rangle,
\end{equation}
where $\left\langle \cdot \right\rangle$ denotes an average over the distribution of random variables $x_1, \ldots, x_n$.
\label{lemma:chebyshev}
\end{lemma}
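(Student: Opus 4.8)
The plan is to recognize Lemma~\ref{lemma:chebyshev} as a classical correlation (Chebyshev/Harris) inequality and to prove it by a double induction: on the number of functions $m$, and on the number of variables $n$. Throughout I would use that $x_1,\ldots,x_n$ are \emph{independent}; this is left implicit in the statement but is essential — for dependent variables the inequality can fail — and it holds in our application, where the $x$'s are the live-edge indicators $d_{ij}$, independent by the definition of the IC model. Two harmless normalizations come first. Since the $f_i$ are comonotonic, for each coordinate $x_k$ either all of them are non-decreasing in $x_k$ or all are non-increasing in $x_k$; reversing the orientation of the coordinates of the second type, I may assume every $f_i$ is non-decreasing in each argument. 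Moreover, since in the application $f_l = 1 - p_l(0) \in [0,1]$, I may assume the $f_i$ are nonnegative, which guarantees that any partial product $\prod_{i \in A} f_i$ is again non-decreasing.

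Next I would reduce to the case $m = 2$ by induction on $m$. Writing $g = \prod_{i=1}^{m-1} f_i$, which is nonnegative, non-decreasing, and hence comonotonic with $f_m$, the $m=2$ case gives $\langle g\, f_m \rangle \ge \langle g \rangle \langle f_m \rangle$, while the inductive hypothesis gives $\langle g \rangle \ge \prod_{i=1}^{m-1} \langle f_i \rangle$; combining the two yields $\langle \prod_{i=1}^{m} f_i \rangle \ge \prod_{i=1}^{m} \langle f_i \rangle$.

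It then remains to prove the case $m=2$, which I would handle by induction on $n$. For the base case $n=1$, introduce two independent copies $x, x'$ of the variable; since $f_1, f_2$ are both non-decreasing, $(f_1(x) - f_1(x'))(f_2(x) - f_2(x')) \ge 0$ pointwise, and taking expectations and using the independence (and identical distribution) of $x$ and $x'$ gives $2\langle f_1 f_2 \rangle - 2 \langle f_1 \rangle \langle f_2 \rangle \ge 0$. For the inductive step, condition on $x_n$ and set $g_i(x_n) = \mathbb{E}[\, f_i \mid x_n \,]$, the average over $x_1,\ldots,x_{n-1}$; because those variables are independent of $x_n$, each $g_i$ inherits non-decreasingness in $x_n$. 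The inductive hypothesis in $n-1$ variables (applied to the conditional law, with $x_n$ frozen) gives $\mathbb{E}[\, f_1 f_2 \mid x_n \,] \ge g_1(x_n) g_2(x_n)$, and averaging over $x_n$ and then applying the already-proven $n=1$ case to $g_1, g_2$ gives $\langle f_1 f_2 \rangle \ge \langle g_1 g_2 \rangle \ge \langle g_1 \rangle \langle g_2 \rangle = \langle f_1 \rangle \langle f_2 \rangle$.

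I do not expect a genuine obstacle here, since the statement is standard; the points that require care are making the independence hypothesis explicit, verifying that monotonicity survives integrating out an independent block of variables (the step $f_i \mapsto g_i$), and the nonnegativity remark needed so that partial products remain monotone in the reduction over $m$. One could alternatively invoke the Harris/FKG inequality for product measures and skip the induction entirely, but the self-contained two-copies argument above is shorter to present.
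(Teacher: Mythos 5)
Your proof is correct, and its core engine is the same as the paper's: the two-copies identity $\bigl(f_1(x)-f_1(x')\bigr)\bigl(f_2(x)-f_2(x')\bigr)\ge 0$ followed by iteration over the $m$ factors. But your write-up is genuinely more complete than the paper's two-line sketch, and the extra care is not cosmetic. First, the paper's pointwise inequality $[f(x)-f(y)][g(x)-g(y)]\ge 0$ holds for all pairs only when $x$ is a \emph{single} scalar variable; for coordinatewise-monotone functions of several variables it fails for incomparable points (e.g.\ $f=x_1$, $g=x_2$ at $x=(1,0)$, $y=(0,1)$), so the multivariate statement really does require your induction on $n$ (conditioning on $x_n$, checking that monotonicity survives integrating out an independent block) — this is exactly the standard tensorization in the proof of Harris's inequality, and it is where the independence of the $x_k$'s enters irreducibly. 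Second, you are right that nonnegativity is needed for the reduction over $m$: as literally stated the lemma is false for $m\ge 3$ without it (take a single non-decreasing $f\in\{-1,0\}$ with $\langle f\rangle=-1/2$; then $\langle f^3\rangle=-1/2<(-1/2)^3$). Both hypotheses — independence of the $x_k$'s and nonnegativity of the $f_i$ — are left implicit in the paper but are satisfied in the only place the lemma is used, namely step $(c)$ of the proof of Theorem~\ref{th:bound}, where the $x_k$'s are the independent live-edge indicators $d_{ij}$ and the factors are products of terms $1-p_l(0)\in[0,1]$. So: same route as the paper at its core, but your version closes real gaps in the stated proof and correctly delimits the hypotheses under which the lemma is true.
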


\begin{lemma}[Overestimation of $p_{j \rightarrow i}(t)$]
On general graphs with loops, estimates obtained through \eqref{eq:p} satisfy $\widehat{p}_{j \rightarrow i}(t) \geq p_{j \rightarrow i}(t)$ for all $(i,j) \in E$.
\label{lemma:underestimation}
\end{lemma}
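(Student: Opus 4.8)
The plan is to prove the Lemma by induction on the time horizon $t$, taking as the induction hypothesis at level $t$ the full statement ``$\widehat{p}_{j\rightarrow i}(t)\geq p_{j\rightarrow i}(t)$ for every directed edge''. The base case $t=0$ is immediate, since the initialization \eqref{eq:initialization} gives $\widehat{p}_{i\rightarrow j}(0)=p_{i}(0)=p_{i\rightarrow j}(0)$ on every edge. For the inductive step I would first use that the DMP update \eqref{eq:p} is non-decreasing in each incoming message: all $b_{lj}\in[0,1]$ and all factors $1-b_{lj}\widehat{p}_{l\rightarrow j}(t-1)$ lie in $[0,1]$, so replacing the messages by the smaller true messages $p_{l\rightarrow j}(t-1)$ can only decrease the right-hand side. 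Combined with the induction hypothesis and the exact relation \eqref{eq:q_through_p}, which on any graph reads $q_{l\rightarrow j}(t)=1-b_{lj}\,p_{l\rightarrow j}(t-1)$, this yields
\[
\widehat{p}_{j\rightarrow i}(t)\;\geq\;1-\bigl[1-p_{j}(0)\bigr]\!\!\prod_{l\in\partial j\backslash i}\!\! q_{l\rightarrow j}(t).
\]
Comparing with the exact cavity identity \eqref{eq:exact_p}, $p_{j\rightarrow i}(t)=1-\bigl[1-p_{j}(0)\bigr]q^{(i)}_{j}(t)$, the inductive step is reduced to proving the single inequality $\prod_{l\in\partial j\backslash i}q_{l\rightarrow j}(t)\leq q^{(i)}_{j}(t)$.

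To establish this I would pass to the live-edge representation. By \eqref{eq:q_message_counting} we have $q_{l\rightarrow j}(t)=\langle g_{l}(\mathbf{d})\rangle$ with $g_{l}(\mathbf{d})=\prod_{m\in\mathcal{N}^{j\leftarrow l}_{t}[\mathbf{d}]}[1-p_{m}(0)]$, and by the analogue of \eqref{eq:q_counting} on the cavity graph with $i$ removed, $q^{(i)}_{j}(t)=\bigl\langle\prod_{m\in\mathcal{N}^{(i)}_{j,t}[\mathbf{d}]}[1-p_{m}(0)]\bigr\rangle$, where $\mathcal{N}^{(i)}_{j,t}[\mathbf{d}]$ is the set of nodes from which $j$ is reachable in that cavity graph. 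Two facts then finish the proof. First, for every realization $\mathbf{d}$ each node of $\mathcal{N}^{(i)}_{j,t}[\mathbf{d}]$ reaches $j$ along a last edge $(l,j)$ with $l\in\partial j\backslash i$, so $\mathcal{N}^{(i)}_{j,t}[\mathbf{d}]\subseteq\bigcup_{l\in\partial j\backslash i}\mathcal{N}^{j\leftarrow l}_{t}[\mathbf{d}]$; since every factor $1-p_{m}(0)$ lies in $[0,1]$, nodes shared by several branches are over-counted on the left and we get the pointwise bound $\prod_{l}g_{l}(\mathbf{d})\leq\prod_{m\in\mathcal{N}^{(i)}_{j,t}[\mathbf{d}]}[1-p_{m}(0)]$, hence $\langle\prod_{l}g_{l}\rangle\leq q^{(i)}_{j}(t)$. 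Second, each $g_{l}$ is non-increasing in every live-edge variable (switching an edge on can only enlarge $\mathcal{N}^{j\leftarrow l}_{t}[\mathbf{d}]$, multiplying in further factors $\leq 1$), so the $g_{l}$ are comonotonic and Lemma~\ref{lemma:chebyshev} gives $\prod_{l}\langle g_{l}\rangle\leq\langle\prod_{l}g_{l}\rangle$. Chaining the two,
\[
\prod_{l\in\partial j\backslash i}q_{l\rightarrow j}(t)\;=\;\prod_{l\in\partial j\backslash i}\langle g_{l}\rangle\;\leq\;\Bigl\langle\prod_{l\in\partial j\backslash i}g_{l}\Bigr\rangle\;\leq\;q^{(i)}_{j}(t),
\]
which together with the monotonicity bound above and \eqref{eq:exact_p} closes the induction.

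The main obstacle is the reduced inequality $\prod_{l}q_{l\rightarrow j}(t)\leq q^{(i)}_{j}(t)$. On a tree the branch reachable sets $\mathcal{N}^{j\leftarrow l}_{t}[\mathbf{d}]$ are edge-disjoint, the $g_{l}$ are independent, and this holds with equality; on a loopy graph the branches overlap, and one genuinely needs both the over-counting observation and the positive-correlation inequality of Lemma~\ref{lemma:chebyshev} to control the dependence between branches. The delicate bookkeeping is to make sure the cavity definition of $q^{(i)}_{j}(t)$ (node $i$ removed) and the message-level definition of $q_{l\rightarrow j}(t)$ (edges out of $j$ removed, paths repeating an edge excluded) are mutually consistent, so that every contribution to $q^{(i)}_{j}(t)$ is indeed captured by at least one branch $\mathcal{N}^{j\leftarrow l}_{t}[\mathbf{d}]$; once that is verified, the remaining steps — the induction skeleton, the base case, and the monotonicity of the update map — are routine.
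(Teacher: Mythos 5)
Your proposal is correct and follows essentially the same route as the paper: induction on $t$ with the base case from the initialization, monotonicity of the update map in the incoming messages, and the key one-step inequality established in the live-edge representation on the cavity graph via the over-counting of branch reachable sets together with the comonotonicity (Chebyshev) inequality of Lemma~\ref{lemma:chebyshev}. The only difference is organizational — you isolate the reduced inequality $\prod_{l\in\partial j\backslash i}q_{l\rightarrow j}(t)\leq q^{(i)}_{j}(t)$ explicitly, which is algebraically equivalent to the paper's intermediate bound \eqref{eq:lemma_step} — so no substantive gap remains.
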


The proofs of Lemmas \ref{lemma:chebyshev} and \ref{lemma:underestimation} are given in the Appendix~\ref{app:proofs_lemmas}. The proof technique is similar to \cite{PhysRevE.82.016101}.

\begin{proof}[Proof of Theorem \ref{th:bound}]



The proof starts with the exact expression \eqref{eq:exact_marginal_p}, and unfolds with the analysis of each of the approximation steps in \eqref{eq:approx_marginal_p}, \eqref{eq:p} and \eqref{eq:marginal_p}, making use of the live-edge graph representation. Let us first notice that using definitions of the sets $\mathcal{N}^i_t[\mathbf{d}]$ and $\mathcal{N}^{i \leftarrow j}_t[\mathbf{d}]$, we have for all realizations of random variables $\mathbf{d}$:
\begin{equation}
     \vert \mathcal{N}^i_t[\mathbf{d}] \vert \leq \sum_{j \in \partial i} \vert \mathcal{N}^{i \leftarrow j}_t[\mathbf{d}] \vert.
     \label{eq:overcounting_sets}
\end{equation}
An illustration for this observation is given in Figure~\ref{fig:reachability}. Assume that $t=3$, and that the realization of $\mathbf{d}$ is the one shown in Figure~\ref{fig:reachability}: all edges in the vicinity of $i$ are live except the edge $(i,j_3)$. By definition, the reachable sets read for this case: $\mathcal{N}^i_{t=3}[\mathbf{d}] = \{j_1, j_2, l_1, l_2\}$; $\mathcal{N}^{i \leftarrow j_1}_{t=3}[\mathbf{d}] = \{j_1, l_1, l_2\}$;  $\mathcal{N}^{i \leftarrow j_2}_{t=3}[\mathbf{d}] = \{j_2, l_1, l_2\}$; and $\mathcal{N}^{i \leftarrow j_3}_{t=3}[\mathbf{d}] = \emptyset$. Notice that $l_1$ and $l_2$ appear in both sets reachable through edges $(i,j_1)$ and $(i,j_2)$, and therefore . On this example, we see that as long as the realization of live edges forms a loop of size smaller than $2t+1$ in the vicinity of $i$, equation \eqref{eq:overcounting_sets} will be verified.

\begin{figure}[thb]
    \centering
    \includegraphics[width=0.37\columnwidth]{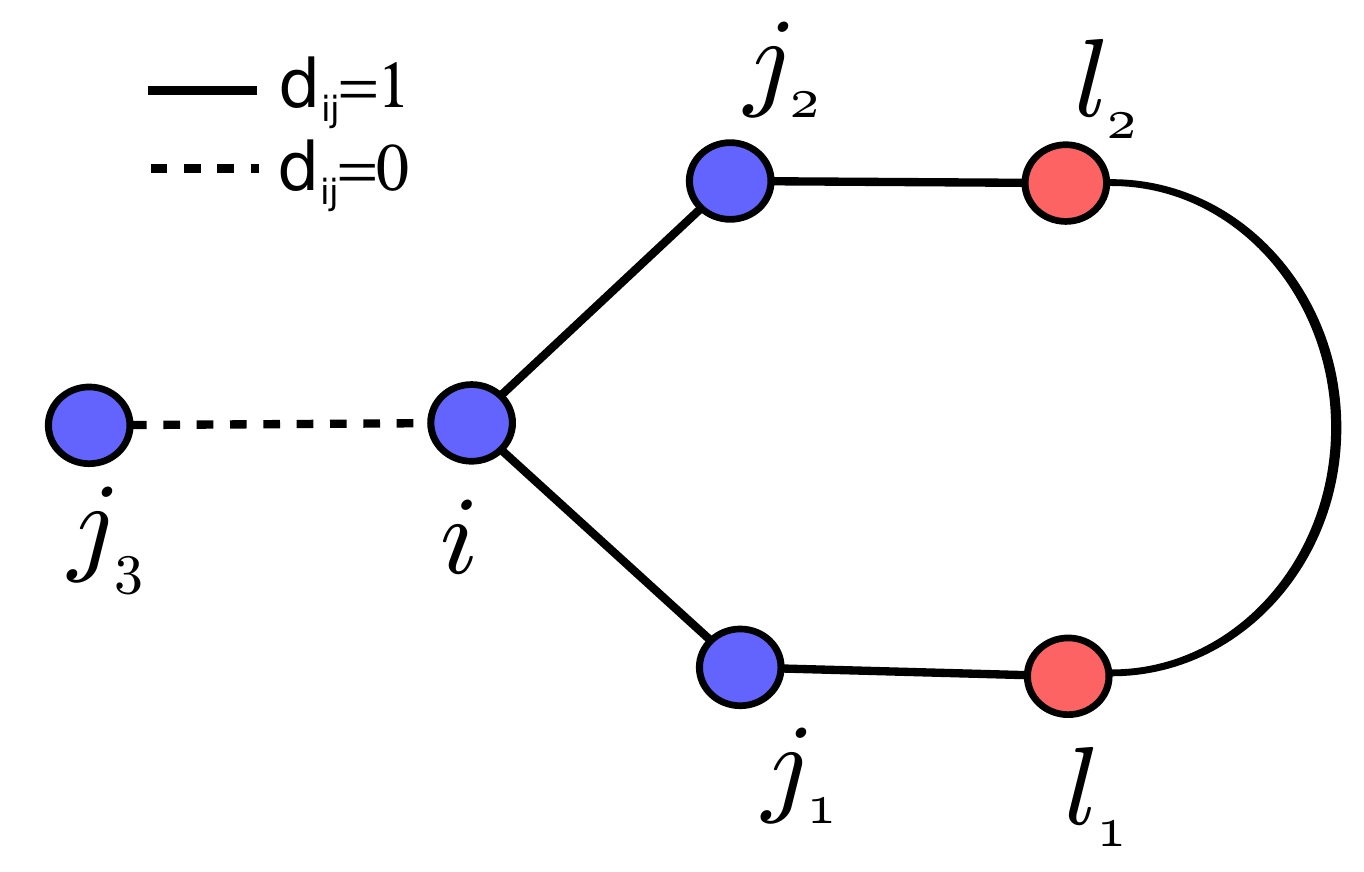}
    \caption{Example used for illustrating the statements of Theorem~\ref{th:bound}. Given the estimation for node $i$ at $t=3$ and for a given realization of live edges represented by $\mathbf{d}$, both red nodes will be counted twice in the right hand side of \eqref{eq:overcounting}, but only once in the left hand side of this expression. }
    \label{fig:reachability}
\end{figure}

As a consequence of \eqref{eq:overcounting_sets}, the following relation 
\begin{equation}
    \left\langle \prod_{l \in \mathcal{N}^i_t[\mathbf{d}]} \left[1 - p_{l}(0) \right] \right\rangle \geq \left\langle \prod_{j \in \partial i} \prod_{l \in \mathcal{N}^{i \leftarrow j}_t[\mathbf{d}]} \left[1 - p_{l}(0) \right] \right\rangle
    \label{eq:overcounting}
\end{equation}
is naturally satisfied because for some $l$ (e.g. $l_1$ and $l_2$ in Figure~\ref{fig:reachability}) the terms $\left[1 - p_{l}(0) \right] \leq 1$ are counted several times.

The proof of the Theorem follows from the following chain of inequalities:
\begin{align}
    p_{i}(T) & \stackrel{(a)}= 1 - \left[ 1 - p_{i}(0) \right] \left\langle \prod_{l \in \mathcal{N}^i_T[\mathbf{d}]} \left[1 - p_{l}(0) \right] \right\rangle
    \label{eq:theorem_(a)}
    \\
    & \stackrel{(b)}\leq 1 - \left[ 1 - p_{i}(0) \right] \left\langle \prod_{j \in \partial i} \prod_{l \in \mathcal{N}^{i \leftarrow j}_T[\mathbf{d}]} \left[1 - p_{l}(0) \right] \right\rangle \\
    & \stackrel{(c)}\leq 1 - \left[ 1 - p_{i}(0) \right] \prod_{j \in \partial i} \left\langle \prod_{l \in \mathcal{N}^{i \leftarrow j}_T[\mathbf{d}]} \left[1 - p_{l}(0) \right] \right\rangle \\
    & \stackrel{(d)}= 1 - \left[ 1 - p_{i}(0) \right] \prod_{j \in \partial i} q_{j \rightarrow i}(T) \\
    & \stackrel{(e)}= 1 - \left[ 1 - p_{i}(0) \right] \prod_{j \in \partial i} \left[ 1 - b_{ji} p_{j \rightarrow i}(T-1) \right]
    \label{eq:theorem(e)}
    \\
    & \stackrel{(f)}\leq 1 - \left[ 1 - p_{i}(0) \right] \prod_{j \in \partial i} \left[ 1 - b_{ji} \widehat{p}_{j \rightarrow i}(T-1) \right] \\
    & \stackrel{(g)}= \widehat{p}_{i}(T),
    \label{eq:theorem_(g)}
\end{align}
where $(a)$ is simply the definition \eqref{eq:q_counting}; $(b)$ follows from \eqref{eq:overcounting}; $(c)$ follows from the application of Lemma~\ref{lemma:chebyshev}; $(d)$ is due to the definition of $q_{j \rightarrow i}(t)$ \eqref{eq:q_message_counting}; $(e)$ expresses the application of the relation \eqref{eq:q_through_p}; $(f)$ follows from Lemma~\ref{lemma:underestimation}; and $(g)$ is the definition of the marginal probability \eqref{eq:marginal_p}. The statement of the theorem immediately follows from the application of \eqref{eq:theorem_(a)}-\eqref{eq:theorem_(g)} in \eqref{eq:exact_influence}:
\begin{equation}
    \sigma_{T}(\mathbf{p_{0}}) = \sum_{i \in V} p_i(T) \leq \sum_{i \in V} \widehat{p}_i(T) = \widehat{\sigma}_{T}(\mathbf{p_{0}}).
\end{equation}
\end{proof}

\section{Proofs of technical Lemmas}
\label{app:proofs_lemmas}

\begin{proof}[Proof of Lemma \ref{lemma:chebyshev}]
    The proof for general $n$ and $m$ straightforwardly follows from a subsequent application of the well-known result for the expectation of the product of two comonotonic functions $f(x)$ and $g(x)$ \cite{Chebyshev}:
    \begin{equation}
    \left\langle f(x) g(x) \right\rangle \geq  \left\langle f(x) \right\rangle \left\langle g(x) \right\rangle
    \label{eq:chebyshev_two_functions}
    \end{equation}
    The simplest way to prove \eqref{eq:chebyshev_two_functions} consists in observing that due to the comonotonocity
    \begin{equation*}
        \left[ f(x) - f(y) \right] \left[ g(x) - g(y) \right] \geq 0
    \end{equation*}
    for any $x$ and $y$, and applying expectation to the last expression.
\end{proof}

\begin{proof}[Proof of Lemma \ref{lemma:underestimation}]

Starting from definition \eqref{eq:exact_p} for $p_{j \rightarrow i}(t)$, let us rewrite the probability $q^{(i)}_{j}(t)$ in the live-edge representation:
\begin{equation}
    q^{(i)}_{j}(t)  = \left\langle \prod_{l \in \mathcal{N}^{j,(i)}_t[\mathbf{d}]} \left[1 - p_{l}(0) \right] \right\rangle,
\end{equation}
where $\mathcal{N}^{j,(i)}_t[\mathbf{d}]$ is the generalization of the reachable set for $j$ in the cavity graph where node $i$ has been removed. The first steps of Lemma's proof closely follow \eqref{eq:theorem_(a)}-\eqref{eq:theorem(e)} while working with the reachable sets in the graph where $i$ is in cavity. Repeating the same arguments as in the in the proof of Theorem~\ref{th:bound}, we prove the following relation:
\begin{equation}
    p_{j \rightarrow i}(t) \leq 1 - \left[ 1 - p_{j}(0) \right] \prod_{l\in \partial j \backslash i}\left[ 1 - b_{lj} p_{l \rightarrow j}(t-1) \right].
    \label{eq:lemma_step}
\end{equation}
The rest of the proof follows by induction. Due to the initializations \eqref{eq:initialization}, we have $p_{j \rightarrow i}(t) = \widehat{p}_{j \rightarrow i}(t)$ for all $(i,j) \in E$. Let us assume that $p_{j \rightarrow i}(t-1) \leq \widehat{p}_{j \rightarrow i}(t-1)$. Then, substituting $\widehat{p}_{l \rightarrow j}(t-1)$ instead of $p_{l \rightarrow j}(t-1)$ in the right hand side of \eqref{eq:lemma_step}, we obtain
\begin{equation}
    p_{j \rightarrow i}(t) \leq \left[ 1 - p_{j}(0) \right] \prod_{l\in \partial j \backslash i}\left[ 1 - b_{lj} \widehat{p}_{l \rightarrow j}(t-1) \right].
    \label{eq:lemma_step1}
\end{equation}
But from the DMP update equation \eqref{eq:p}, the right hand side of \eqref{eq:lemma_step1} is equal to $\widehat{p}_{j \rightarrow i}(t)$, from which Lemma's statement follows: $p_{j \rightarrow i}(t) \leq \widehat{p}_{j \rightarrow i}(t)$.
\end{proof}

\end{document}